\newcommand{\N}{\mathbb{N}}
\newcommand{\R}{\mathbb{R}}
\newcommand{\CP}{\mathcal{P}}
\newcommand{\CE}{\mathcal{E}}
\newcommand{\CG}{\mathcal{G}}
\newcommand{\CI}{\mathcal{I}}
\newcommand{\CO}{\mathcal{O}}
\newcommand{\CV}{\mathcal{V}}
\newcommand{\CW}{\mathcal{W}}
\newcommand{\frog}{\textsc{Frog}}
\newcommand{\minvc}{\textsc{MinVertexCover}}
\newcommand{\mincoloring}{\textsc{MinColoring}}
\newcommand{\maxminvc}{\textsc{MaxMinVertexCover}}
\begin{document}
\mainmatter              
\title{Routing Games over Time with FIFO policy}
\titlerunning{Routing Games over Time with FIFO policy}  
%
\author{Anisse Ismaili}
\authorrunning{Anisse Ismaili} 
%
\tocauthor{Anisse Ismaili}
%
\institute{Ito Takayuki Laboratory, Nagoya Institute of Technology, Japan\\
\email{anisse.ismaili@gmail.com}}

\maketitle              

\vspace*{-2mm}
\begin{abstract}
We study atomic routing games where every agent travels both along its decided edges and through time. The agents arriving on an edge are first lined up in a \emph{first-in-first-out} queue and may wait: an edge is associated with a capacity, which defines how many agents-per-time-step can pop from the queue's head and enter the edge, to transit for a fixed delay. We show that the best-response optimization problem is not approximable, and that deciding the existence of a Nash equilibrium is complete for the second level of the polynomial hierarchy. Then, we drop the rationality assumption, introduce a behavioral concept based on GPS navigation, and study its worst-case efficiency ratio to coordination.
\vspace*{-5mm}
\keywords{Routing Games over Time, Complexity, Price of Anarchy}
\end{abstract}

\vspace*{-8mm}
\section{Introduction}
\vspace*{-2mm}

Numerous selfish agents use a routing network to take shortest paths that may however congest the paths of others.
\emph{Routing games} model such conflictual systems
by a graph of vertices and edges, 
and every agent decides a path from a source to a sink,
path which cost is congestion-dependent.
Routing games find applications in road traffic \cite{wardrop1952road}, 
as well as in routing packets of data via Internet Protocol \cite{koutsoupias1999worst}.
Founding results\footnote{Static routing games were a crucial testbed for the Price of Anarchy, a concept that bounds a game's loss of efficiency due to selfish  behaviors.} have been obtained on \emph{static} routing games \cite{roughgarden2002bad,roughgarden2005selfish,christodoulou2005price,awerbuch2005price,nisan2007algorithmic,roughgarden2009intrinsic}, where each individual path instantaneously occurs everywhere over its decided edges.
%
Such instantaneousness does not reflect that an agent on one edge of its path
is not elsewhere, and cannot congest other edges.
Routing games \emph{over time}, where every agent travels along its route as well as \emph{through time}, were introduced more recently \cite{koch2009nash,anshelevich2009equilibria}.
Introducing time makes games more complicated: pure-strategy Nash equilibria are often not guaranteed; problems such as computing a best-response or an equilibrium are hard; the Price of Anarchy (PoA) can be large. 

We study asymmetric atomic routing games over integer time-steps
that model congestion with a very natural \emph{first-in-first-out} (FIFO) queuing policy on the edges \cite{werth2014atomic}.
Every edge $e$ has an integer \emph{fixed delay} $d_e$ and an integer \emph{capacity} $c_e$. On an edge, every arriving agent is lined up in the edge's FIFO queue (a discrete list); the capacity defines how many agents-per-time-step can pop from the queue's head and transit through the edge, while the others wait the next time-step. Every agent aims at minimizing the time from source to sink.
%

\textbf{Related Work.}
%
Only pure Nash equilibria (PNE) are considered here.
It is the same (resp. different) source/sink in the symmetric (resp. asymmetric) case.

\cite{harks2006competitive,harks2009competitive}
studies multicommodity routing problems, where asymmetric commodities are routed sequentially and the cost of edges is load dependent. With affine costs, while in the splittable case the PoA is almost 4, in the unsplittable case, computing a best-response is NP-hard, and the PoA is $3+2\sqrt{2}$.

\cite{farzad2008priority}
observes that ``\emph{a car traversing a road can only cause congestion delays to those cars that use the road at a later time}'' and proposes an asymmetric model where every edge has a priority on agents, agents that are congested only by those with a higher priority on the edge. While a global priority (same fixed priority for every edge) guarantees the existence of a PNE, local priorities do not. Several (matching) bounds are derived on the PoA.

\cite{koch2009nash,koch2011nash} 
introduces competitive flows over time, by building a non-atomic symmetric model upon the literature about deterministic queuing.
Every edge has a fixed transit delay, and a capacity that bounds above the edge's outflow.  
It is shown that a sequence of $\varepsilon$-Nash flows converges (as $\varepsilon\rightarrow 0$) to a Nash flow; and  an iterative algorithm is proposed.
While the evacuation-PoA can be arbitrarily large, the time-PoA is in $O(1)$.

\cite{anshelevich2009equilibria} proposes a dynamic selfish routing model with non-atomic asymmetric agents. A very general delay function $d_e(x,H_e^t)$ of the demand $x$ and the historic $H_e^t$ is introduced, along with a generalized notion of FIFO, which just states that there are no crossovers. Concurrently to \cite{koch2009nash}, it is shown that in the symmetric case, a PNE always exists and can be computed efficiently. However, in the asymmetric case and under a specification of FIFO where an entering agent waits the previous one's end of transit, it is shown that an equilibrium may not exist, and the PoA is bounded below by the number of vertices. Flow independent delays can be reduced to static flows, providing a PoA bound. 

\cite{hoefer2009competitive,hoefer2011competitive}
proposes temporal (asymmetric and atomic) network congestion games. Every edge has a speed $a_e\in\R_{>0}$ (latency equals speed times weight of agents being processed), and different local policies are studied.
Under FIFO, an edge processes a unitary agent in time $a_e$, while the other agents wait. 
A guaranteed PNE can be computed efficiently for the unweighted symmetric case, despite the NP-hardness of computing a best-response. In the weighted or asymmetric cases, an equilibrium may not exist.
One could reduce one of our edges $e$ to $c_e\times d_e$ speedy-edges having $a_e=1$, but it is \emph{pseudo}-polynomial.
Conversely, it is also unclear how we could reduce this model to the present one. 
%
%

Our model is the same as in \cite{werth2014atomic}, an atomic variation over integer time-steps of \cite{koch2009nash}, where every edge has a free-flow delay and a capacity that bounds above the inflow-per-time-step. In \cite{werth2014atomic}, the emphasis is rather on \emph{bottleneck} individual objectives, but also on the \emph{sum} on the edges in the path. 
A PNE may not exist; 
Computing a best-response is NP-complete; 
Verifying a PNE is coNP-complete;
Deciding PNE existence is at least NP-hard.
Also, a bound is provided on the PoA.
%
\cite{werth2015robust} studies games where agents are robust bottleneck optimizers that only know an interval about the cost of edges and learn the actual cost later.

\cite{harks2016competitive} studies a model similar to \cite{werth2014atomic} and ours, but instead of FIFO, studies local and overall priorities on the edges. (Crossovers may occur.)
Some bounds on the price of stability and PoA are derived.
Computing optimal priority lists is shown APX-hard.
Under local priorities, computing best-responses is NP-hard, as well as computing a PNE.

Furthermore, \cite{macko2010braess,macko2013braess} generalizes Braess's Paradox to the model in \cite{koch2009nash}, and Braess's ratio can be much more severe.
%
%
%
%
%
%
%
%
%
%
%
%
\cite{bhaskar2015stackelberg} considers a model in the fashion of \cite{koch2009nash} and shows that under a Stackelberg strategy, the time-PoA is $(1-1/e)^{-1}$, and the total-delay-PoA is $2(1-1/e)^{-1}$.
%
%
%
\cite{cao2017arxiv,cao2017ec} propose an extensive form model where agents take new decisions on each vertex.

\textbf{Results.}
%
In this paper's model \cite[\emph{sum} objective] {werth2014atomic}, the \emph{new} results are marked here with a \emph{star}$^\ast$:
\vspace*{-0.2cm}
\begin{enumerate}[align=left,labelwidth=1.2cm, leftmargin=1.4cm]
\item[Th.1] A pure-strategy Nash equilibrium may not exist
\footnote{\cite[App. B.1, Fig. B.11]{werth2014atomic} contains a similar result. For culture and the pleasure of the eyes, we include a close didactic counter-example in our Preliminaries.}.
\item[Th.2$^{~}$] The payoffs are well defined and calculable in polynomial-time
\footnote{A close Dijkstra-style algorithm for local priorities lies in \cite[Prop2.2]{harks2016competitive}.}.
\item[Th.3$^{~\ast}$] The best-response decision problem is NP-complete
\footnote{Theorem \ref{th:NPcomplete} differs from \cite[Sec. 3.1]{hoefer2009competitive} where edges instead only have a speed.}
\footnote{Not addressed in \cite[Appendix B]{werth2014atomic}, but for \emph{bottleneck} \cite[Cor. 4]{werth2014atomic}.}
\footnote{Our reduction from problem \minvc~is the base of Th. \ref{th:APXhard}, \ref{th:coNP} and \ref{th:sigma2}.}.
\item[Th.4$^{~\ast}$] The best-response optimization problem is APX-hard,
\item[Th.5$^{~\ast}$] and it is NP-hard to approximate within $|V|^{\frac{1}{6}-\varepsilon}$, and within ${n^{\frac{1}{7}-\varepsilon}}$.
\item[Th.6$^{~}$] Verification of equilibria is coNP-complete\footnote{Theorem \ref{th:coNP} differs from \cite[Cor. 4]{werth2014atomic}, where the objectives are bottlenecks. \cite[Sec. 7]{werth2014atomic} claims that one can derive NP-\emph{hardness} for sum-objectives. Theorems \ref{th:map} and \ref{th:NPcomplete} enable to obtain coNP-\emph{completeness}, which is then partly novel.}.
\item[Th.7$^{~\ast}$] Existence of equilibria is $\Sigma_2^P$-complete\footnote{As a $\Sigma_2^P$-\emph{completeness} result, Th. \ref{th:sigma2} clearly improves on the NP-\emph{hardness} shown in \cite[App. B.3, Th. 19]{werth2014atomic}.}.
\end{enumerate}
\vspace*{-0.2cm}

That best-responses are not approximable, deeply questions the rationality assumption of PNE.
We then introduce a behavioral model 
for vehicles taking decisions by GPS, 
inspired by how navigation assistants work: by retrieving information on the current traffic and recomputing shortest paths in real-time.
On the worst-case efficiency ratio of GPS navigation, to coordination, we found:\vspace*{-0.2cm}
\begin{enumerate}[align=left,labelwidth=1.2cm, leftmargin=1.4cm]
\item[Th.8$^{~\ast}$] Allowing walks\footnote{A walk is an alternating sequence of vertices and edges, consistent with the given (di)graph, and that allows repetitions and infiniteness.} as strategies, GPS-agents may cycle infinitely.
\item[Th.9$^{~\ast}$] The Price of GPS Navigation is in $\Omega(|V|+n)$ as the number of vertices $|V|$ and the number of agents $n$ grow.
\end{enumerate}
\vspace*{-0.2cm}

\textbf{Model Discussion.} 
The positioning of waiting queues on the edges' tails, and of fixed-delays inside edges, is without much loss of generality.
Indeed, this choice reduces in polynomial time from/to models where the queue occurs after the fixed delay, where queues are on the nodes and fixed delays on the edges, where edges are unoriented, etc. 
Furthermore, one can model starting times by adding edges, and bottlenecks by delay $d_e=0$ edges.
One can also note that on each edge $e$,
delay $\lfloor(|q_e|-1)/c_e\rfloor+d_e$ is 
almost an affine function of congestion $|q_e|$ (the queue's length).
Since the unweighted agents case that we consider is a particular case of the weighted case (and Algorithm \ref{alg:map} adapts), our complexity results and efficiency lower bounds extend to weighted agents.

\vspace*{-0.2cm}
\section{Preliminaries}\label{sec:prelim}
\vspace*{-0.2cm}




\begin{definition}
A First-in-first-out Routing Game (\frog) is a non-cooperative finite game characterized by tuple $\Gamma=\left(G=(V,E),(c_e,d_e)_{e\in E},N,(s_i,s_i^{\ast})_{i\in N},\succ\right)$.
\begin{itemize}
\setlength\itemsep{0.3em}
\item $G=(V,E)$ is a finite digraph with vertices $V$ and edges $E\subseteq V\times V$.
\item Given edge $e\in E$, positive number $c_e\in\N_{\geq 1}$ is the capacity of edge $e$, and non-negative number $d_e\in\N_{\geq 0}$ is the fixed delay on edge $e$.
\item Finite set $N=\{1,\ldots,n\}$ is the set of agents.
\item Given agent $i\in N$, vertices $s_i,s_i^{\ast}\in V$ are its source vertex and sink vertex.
\item Strict order $\succ$ on set $N$ is a tie-breaking priority on agents. 
\end{itemize}
\end{definition}

For a given \frog, we introduce the following notations.
For every agent $i$,
its \emph{strategy-set} $\CP_i$ consists of every simple path $\pi_i$ from source vertex $s_i$ to sink vertex $s_i^{\ast}$. 
A \emph{strategy-profile} $(\pi_1,\ldots,\pi_n)\in\CP_1\times\dots\times\CP_n$, which for short we denote in bold by $\bm{\pi}\in\bm{\CP}$, defines a strategy for every agent. 
For a given strategy-profile $\bm{\pi}\in\bm{\CP}$; 
strategy $\pi_i$ is the strategy of agent $i$ therein (a simple path from $s_i$ to $s_i^\ast$); 
\emph{adversary strategy-profile} $\bm{\pi}_{-i}\in\prod_{j\neq i}\CP_j$ consists of all strategies in $\bm{\pi}$ but agent $i$'s;
and given strategy $\pi_i'$, strategy-profile $(\pi_i',\bm{\pi}_{-i}) \in\bm{\CP}$ is obtained from strategy-profile $\bm{\pi}$ by changing strategy $\pi_i$ into $\pi_i'$.

Agents travel both along edges and through \emph{time}.
For an agent $i$, \emph{total delay} $C_i:\bm{\CP}\rightarrow\N_{\geq 0}$ 
is a function of the strategy-profile, defined as follows.
As depicted in Fig. \ref{fig:fifo},
when agent $i$ arrives on edge $e\in\pi_i$, it lines up in a first-in-first-out (FIFO) queue specific to edge $e$. At each time-step, edge $e$ lets the $c_e$ first agents in the queue enter the edge to transit for $d_e$ time steps. Let function $w_{i,e}:\bm{\CP}\rightarrow\N_{\geq 0}$ be the time spent waiting by agent $i$ in the queue of edge $e$.
It follows that agent $i$'s total delay is defined by equality
\vspace*{-1mm}
\begin{eqnarray*}
C_i(\bm{\pi})  &\quad=\quad & \sum_{e\in\pi_i} w_{i,e}(\bm{\pi}) + d_e.
\end{eqnarray*}
If, on one edge, some agents arrive at the same exact time step, then these synchronous agents are ordered in the edge's queue by tie-breaking priority $\succ$. 
Section \ref{sec:frog-delays} shows how to compute, given a strategy profile, the total delays.

\begin{figure}[t]
\centering
\begin{tikzpicture}
\foreach \x in {1,3,5}
{
	\foreach \y in {1,...,3}
	{
		\node[circle,draw=black,fill=white!80!black,minimum size=8] at (0.4*\x,0.4*\y) {};
	}
}
\node[circle,draw=black,fill=white!80!black,minimum size=8] at (0.8,0.4) {};
\node[circle,draw=black,fill=white!80!black,minimum size=8] at (0.8,0.8) {};
\node[circle,draw=black,fill=white!80!black,minimum size=8] at (1.6,0.4) {};

\draw  (0.2,0.2) rectangle (2.2,1.4);
\draw[dotted] (2.2,1.4) -- (3.2,0.8) -- (2.2,0.2);
\node at (-0.3,0.8) {${c_e}$};
\draw[>-<,thick] (-0.1,0.24) -- (-0.1,1.36);
\node at (1.2,-0.2) {${d_e}$};
\draw[->,thick] (0.2,0.05) -- (2.2,0.05);
\node[circle,dotted,inner sep=2mm,draw=black,fill=white] at (3.2,0.8) {};

\draw[dotted] (-0.6,0.2)--(0.4,0.2);
\draw[dotted] (-0.6,1.4)--(0.4,1.4);

\draw[] (-0.6,0.2) -- (-0.6,1.4);
\draw[] (-1.4,0.2) -- (-1.4,1.4);
\draw[] (-2.2,0.2) -- (-2.2,1.4);
\draw[] (-3.0,0.2) -- (-0.6,0.2);
\draw[dashed,thin] (-3.0,0.6) -- (-0.6,0.6);
\draw[dashed,thin] (-3.0,1.0) -- (-0.6,1.0);
\draw[] (-3.0,1.4) -- (-0.6,1.4);
\draw[->] (-2.2,0.05) -- (-0.6,0.05);
\node at (-1.6,-0.2) {queue};

\node[circle,draw=black,fill=white!80!black,minimum size=8] at (-0.8,1.2) {};
\node[circle,draw=black,fill=white!80!black,minimum size=8] at (-0.8,0.8) {};
\node[circle,draw=black,fill=white!80!black,minimum size=8] at (-0.8,0.4) {};
\node[circle,draw=black,fill=white!80!black,minimum size=8] at (-1.6,1.2) {};
\node[circle,draw=black,fill=white!80!black,minimum size=8] at (-1.6,0.8) {};
\node[circle,draw=black,fill=white!80!black,minimum size=8] at (-1.6,0.4) {};
\node[circle,draw=black,fill=white!80!black,minimum size=8] at (-2.4,0.8) {};
\node[circle,draw=black,fill=white!80!black,minimum size=8] at (-2.4,0.4) {};

\draw[dotted, thick] plot [smooth] coordinates 
{ (-2.6,+1.2) (-2.4,+1.0) (-2.45,+0.6) (-2.2,+0.35)
 (-1.95,+0.6) (-2.05,+1.0) (-1.8,+1.25)
 (-1.8,+1.2) (-1.6,+1.0) (-1.65,+0.6) (-1.4,+0.35)
 (-1.15,+0.6) (-1.25,+1.0) (-1.0,+1.25) 
 (-0.75,+1.0) (-0.85,+0.6) (-0.6,+0.35) (-0.5,+0.35)};
\draw[->] (-2.8,+1.2) -- (-2.6,+1.2);
\draw[->] (-0.6,+0.35) -- (-0.3,+0.35);
\end{tikzpicture}
\caption{On a given edge $e$, agents (gray rounds) are first lined up in a FIFO queue. 
The edge lets the $c_e$ first agents enter at each time-step, to travel for $d_e$ time steps.}\label{fig:fifo}
\end{figure}

A rational agent, given an adversary strategy-profile,
individually optimizes its total delay. 
This rationality assumption induces standard concepts:
\begin{definition}
Given agent $i$ and adversary strategy-profile $\bm{\pi}_{-i}$, strategy $\pi_i$ is a best-response if and only if: 
\quad $C_i(\pi_i,\bm{\pi}_{-i}) \quad=\quad \min_{\pi_i'\in \CP_i}\left\{ C_i(\pi_i',\bm{\pi}_{-i})\right\}.$
\end{definition}
\begin{definition}
A pure Nash equilibrium (PNE) is a strategy-profile $\bm{\pi}\in\bm{\CP}$ where
$$
\forall i\in N,\quad
\forall \pi_i'\in \CP_i,\quad
C_i(\bm{\pi})\leq C_i(\pi_i',\bm{\pi}_{-i}).
$$
\end{definition}
In plain words, strategy-profile $\bm{\pi}\in\bm{\CP}$ is a PNE if no agent has an individual incentive to deviate from his current strategy, hence if every agent plays a best-response. To illustrate the definitions above 
we recall (Fig. \ref{fig:counter}) a didactic variation of a known counter-example \cite[Fig. B.11]{werth2014atomic}, which implies Theorem \ref{th:nexist}.

\begin{figure}[!t]
\centering
\begin{tikzpicture}
[square/.style={regular polygon,regular polygon sides=4}]
\draw[dotted,white]  (0.0,-2.0) rectangle (12.0,+2.0);
\draw[-{Stealth[scale=1.0]}] (1.5,0) -- node[above] {$1$} (2.5,1);
\draw[-{Stealth[scale=1.0]}] (1.5,0) -- node[above] {$1$} (2.5,-1);
\draw[-{Stealth[scale=1.0]}] (2.5,1) -- node[above] {$10$} (3.5,1);
\draw[-{Stealth[scale=1.0]}] (2.5,-1) -- node[above] {$10$} (3.5,-1);
\draw[-{Stealth[scale=1.0]}] (3.5,1) -- node[above] {$1$} (4,0.5);
\draw[-{Stealth[scale=1.0]}] (3.5,-1) -- node[above left] {$1$} (4,-0.5);
\draw[-{Stealth[scale=1.0]}] (4,0.5) -- node[above right] {$0$} (4.5,0);
\draw[-{Stealth[scale=1.0]}] (4,-0.5) -- node[above left] {$0$}  (4.5,0);
\draw[-{Stealth[scale=1.0]}] (7.5,0) -- node[above] {$1$} (8.5,1);
\draw[-{Stealth[scale=1.0]}] (7.5,0) -- node[above] {$1$} (8.5,-1);
\draw[-{Stealth[scale=1.0]}] (8.5,1) -- node[above] {$10$} (9.5,1);
\draw[-{Stealth[scale=1.0]}] (8.5,-1) -- node[above] {$10$} (9.5,-1);
\draw[-{Stealth[scale=1.0]}] (9.5,1) -- node[above] {$1$} (10,0.5);
\draw[-{Stealth[scale=1.0]}] (9.5,-1) -- node[above left] {$1$} (10,-0.5);
\draw[-{Stealth[scale=1.0]}] (10,0.5) -- node[above] {$0$} (10.5,0);
\draw[-{Stealth[scale=1.0]}] (10,-0.5) -- node[above left] {$0$} (10.5,0);
\draw[-{Stealth[scale=1.2]}] plot [smooth] coordinates 
{(8.5,1) (7.5,1.5) (4.5,1.5) (3.5,1)};
\node[] at (6,2.4) {$8$};
\draw[-{Stealth[scale=1.2]}] plot [smooth] coordinates 
{(8.5,-1) (7.5,-1.5) (4.5,-1.5) (3.5,-1)};
\node[] at (6,-1.9) {$8$};
\draw[-{Stealth[scale=1.0]}] (2,1.5) -- node[above] {$2$} (2.5,1);
\draw[-{Stealth[scale=1.0]}] (2,-1.5) -- node[above left] {$2$} (2.5,-1);
\draw[-{Stealth[scale=1.2]},double] plot [smooth] coordinates 
{(2.5,1) (3.5,2) (8.5,2) (9.5,1)} ;
\node[] at (6,1.8) {$10$};
\draw[-{Stealth[scale=1.2]},double] plot  [smooth]  coordinates 
{(2.5,-1) (3.5,-2) (8.5,-2) (9.5,-1)} ;
\node[] at (6,-1.35) {$10$};
\node[square,draw=black,fill=black!10,inner sep=-0.0em] at (0.6,0) {$1$};
\node[square,draw=black,fill=black!10,inner sep=-0.0em] at (1.0,0) {$5$};
\node[square,draw=black,fill=black!10,inner sep=-0.0em] at (1.4,0) {$6$};
\node[square,draw=black,fill=black!10,inner sep=-0.0em] at (6.6,0) {$2$};
\node[square,draw=black,fill=black!10,inner sep=-0.0em] at (7.0,0) {$3$};
\node[square,draw=black,fill=black!10,inner sep=-0.0em] at (7.4,0) {$4$};
\node[square,draw=black,fill=black!10,inner sep=-0.0em] at (1.8,1.5) {$7$};
\node[square,draw=black,fill=black!10,inner sep=-0.0em] at (1.8,-1.5) {$8$};
\node[diamond,draw=black,fill=black!10,inner sep=-0.0em] at (4.7,0) {$1$};
\node[diamond,draw=black,fill=black!10,inner sep=-0.0em] at (4.05,0.75) {$3$};
\node[diamond,draw=black,fill=black!10,inner sep=-0.0em] at (4.05,-0.75) {$4$};
\node[diamond,draw=black,fill=black!10,inner sep=-0.0em] at (10.7,0) {$2$};
\node[diamond,draw=black,fill=black!10,inner sep=-0.0em] at (10.05,0.75) {$5$};
\node[diamond,draw=black,fill=black!10,inner sep=-0.0em] at (10.05,-0.75) {$6$};
\node[diamond,draw=black,fill=black!10,inner sep=-0.0em] at (10.3,1.0) {$7$};
\node[diamond,draw=black,fill=black!10,inner sep=-0.0em] at (10.3,-1.0) {$8$};
\node[] at (2.95,0) {{\bf Pursuer}};
\node[] at (9,0) {{\bf Evader}};
\end{tikzpicture}
\caption{An example of a \frog~where there is no PNE. Single edges have capacity one. Double edges have capacity two. The fixed delays of edges are the numbers displayed above. The sources (resp. sinks) of agents are indicated by squares (resp. diamonds). Choose any tie breaking priority compatible with $2~\succ~3\!\sim\!4~\succ~1~\succ~5\!\sim\!6~\succ~7\!\sim\!8$. Agent one \emph{Pursuer} and two \emph{Evader} have two strategies; the others have only one.}\label{fig:counter}
\end{figure}
\begin{theorem}\label{th:nexist}
In a \frog, there may not exist any PNE \cite{werth2014atomic}.
\end{theorem}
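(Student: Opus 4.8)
The plan is to analyse the explicit game drawn in Fig.~\ref{fig:counter} and to show that none of its strategy-profiles is a PNE. Only two agents have a genuine choice --- the \emph{Pursuer} (agent~$1$) and the \emph{Evader} (agent~$2$), each owning exactly one ``upper'' and one ``lower'' simple path --- while the six auxiliary agents $3,\dots,8$ each have a unique path and serve only to shape the queues and the arrival times. Hence the game has exactly four strategy-profiles, which I abbreviate $(U,U),(U,D),(D,U),(D,D)$ by the side chosen by the Pursuer and by the Evader, and it suffices to exhibit, in each of them, one agent with a strictly profitable unilateral deviation.

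First I would compute, by running the total-delay procedure of Section~\ref{sec:frog-delays}, the four pairs $\bigl(C_1(\bm{\pi}),C_2(\bm{\pi})\bigr)$ obtained as the Pursuer and the Evader independently pick a side. The capacities, the branch delays $1,10,1,0$, the long delay-$8$ and delay-$10$ edges, the delay-$2$ feeder edges of agents $7,8$, and the tie-breaking order $2\succ 3\!\sim\!4\succ 1\succ 5\!\sim\!6\succ 7\!\sim\!8$ are chosen so that the best-response behaviour is that of a pursuit: on a \emph{shared} side the timing together with $\succ$ places the Pursuer in the bottleneck queue just behind the Evader, so it is the Evader who is delayed and wants to flee while the Pursuer is content; on \emph{opposite} sides the Pursuer's side is the congested one (because of the auxiliary agents it then meets there) and the Pursuer strictly prefers to jump onto the Evader's side.

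The conclusion I expect to verify is that the ``some agent wants to deviate'' relation contains the $4$-cycle
\begin{equation*}
(U,U)\ \xrightarrow{\,E\,}\ (U,D)\ \xrightarrow{\,P\,}\ (D,D)\ \xrightarrow{\,E\,}\ (D,U)\ \xrightarrow{\,P\,}\ (U,U),
\end{equation*}
where the arrow labelled $E$ (resp.~$P$) means that in the source profile the Evader (resp.~the Pursuer) strictly lowers its total delay by switching sides. Since in each of these four --- hence all --- strategy-profiles some agent has an incentive to deviate, no profile is a PNE, which proves Theorem~\ref{th:nexist}.

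The only real difficulty is bookkeeping: one must track, edge by edge and time-step by time-step, the state of every bottleneck FIFO queue under the interaction of the two choosing agents with the fixed agents $3,\dots,8$, and in particular check that synchronous arrivals are ordered by $\succ$ exactly as intended (Pursuer immediately behind Evader on a common side; the auxiliary congesters ahead of whichever choosing agent took the ``wrong'' side). The numerical parameters are chosen so that the resulting waiting times $w_{i,e}(\bm{\pi})$ make each of the four deviations \emph{strict}; once these are in hand, comparing the four values of $C_1$ and of $C_2$ across the profiles is immediate.
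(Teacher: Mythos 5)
Your proposal takes essentially the same route as the paper: it analyses the game of Fig.~\ref{fig:counter} as a matching-pennies (pursuit--evasion) structure, and the best-response cycle $(U,U)\rightarrow(U,D)\rightarrow(D,D)\rightarrow(D,U)\rightarrow(U,U)$ you predict is exactly what the paper's delay values ($12$ for the matched/mismatched ``winner'', $13$ for the ``loser'') yield. One small caveat: the mechanism is not the Pursuer queuing directly behind the Evader but an indirect one --- the externalities travel through the auxiliary agents $3$--$8$ across the long delay-$8$ and delay-$10$ edges (e.g., on a shared side the Pursuer delays agent $5$, who together with agent $7$ congests the Evader's later queue) --- though since you defer to the explicit computation, this does not affect the validity of the argument.
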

\begin{proof}[Theorem \ref{th:nexist}]
Recall that in a pursuer-evader game, the two agents have two corresponding strategies; the pursuer prefers to decide the same; the evader prefers to decide differently; and consequently there is no PNE.
In Fig. \ref{fig:counter}, all agents are degenerate\footnote{A degenerate agent only has one strategy, but can still incur and cause externalities.}, but agent one \emph{Pursuer} and agent two \emph{Evader}, who can decide between two paths: \emph{up} or \emph{down}. 
Agents three and four transmit from Evader to Pursuer a positive externality for deciding the same, and agents five to eight, from Pursuer to Evader, a negative externality.

If both agents decide the same, without loss of generality path \emph{up}, then Evader makes agent three wait one time-step, who in turn arrives one step after Pursuer where they could have collided; hence the total delay of Pursuer is $12$.  
Also, Pursuer makes agent five arrive at time step $10$ instead of $9$ on the possible collision point with Evader. Moreover, agent seven also arrives there at time $10$. Consequently, this queue is congested by five and seven and Evader waits one time-step. So Evader's total delay is $13$.
Similarly, one can show that when they decide different strategies, 
Pursuer's total delay is $13$, and Evader's is $12$.
To conclude, Fig. \ref{fig:counter} is a pursuer-evader game, and so has no PNE.\qed 
\end{proof}
\begin{definition}We study this sequence of computational problems.
\begin{enumerate}[align=left,labelwidth=2.8cm, leftmargin=3.5cm]
\itemsep0.2em 
\item[\textsc{Frog/Delays}:] 
Given a \frog ~$\Gamma$ and a strategy-profile $\bm{\pi}$,
compute the total delays $\left(C_1(\bm{\pi}),\ldots,C_n(\bm{\pi})\right)$ of every agent.
\item[\textsc{Frog/Br/Opt}:] 
Given a \frog ~$\Gamma$, an agent $i$, and an adversary strategy-profile $\bm{\pi}_{-i}$,
compute a best-response $\pi_i$ for agent $i$.
\item[\textsc{Frog/Br/Dec}:]  Decision version of \textsc{Frog/Br/Opt}.
Given a \frog ~$\Gamma$, an agent $i$, an adversary strategy-profile $\bm{\pi}_{-i}$, and an integer threshold $\kappa\in\N_{\geq 0}$,
decide whether there exists a strategy $\pi_i$ with cost $C_i(\pi_i,\bm{\pi}_{-i})\leq\kappa$.
\item[\textsc{Frog/NE/Verif}:]
Given a \frog ~$\Gamma$ and a strategy-profile $\bm{\pi}$,
decide whether strategy-profile $\bm{\pi}$ is a PNE.
\item[\textsc{Frog/NE/Exist}:]
Given a \frog, decide whether it admits a PNE.
\end{enumerate}
\end{definition}

The representation size of \frog s is a polynomial of numbers $|V|$ and $n$. 
We assume that the following concepts are common knowledge:
decision problem, length function, complexity classes P, ZPP, NP, coNP, $\Sigma_2^P$, $\Pi_2^P$, PH, NPO and APX, polynomial-time reduction, L-reduction, hardness and completeness.

\section{Mapping Strategy-Profiles to Total-Delays}\label{sec:frog-delays}

Strikingly, the mapping from strategy-profiles to payoffs is not well defined under local priorities, when there are directed cycles of length zero \cite{harks2016competitive}.
Under FIFO priorities with a tie-breaking order, we show the following.

\begin{theorem}\label{th:map}
The mapping from strategy-profiles to total-delays is well defined, 
and there is$^{~(c)}$ a polynomial-time algorithm to compute it: 
$\textsc{Frog/Delays}\in \textsc{P}$.
\end{theorem}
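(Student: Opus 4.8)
The plan is to simulate the FIFO dynamics forward in time, step by step, and argue that every agent eventually reaches its sink within a polynomially-bounded horizon, so that the simulation halts in polynomial time and produces the (unique) total delays. Concretely, I would maintain a global state indexed by integer time-steps $t = 0, 1, 2, \ldots$: for each edge $e$, the contents of its FIFO queue (an ordered list of agents waiting to enter $e$), and for each agent in transit on $e$, the time-step at which it will exit; plus, for each agent, a pointer to its current position along its simple path $\pi_i$. At $t=0$ every agent $i$ is appended to the queue of the first edge of $\pi_i$, ties among synchronous arrivals broken by $\succ$. To advance one step: for each edge $e$, pop the first $\min(c_e, |q_e|)$ agents from $q_e$ and schedule each to exit at time $t + d_e$; then, for each agent whose transit on its current edge ends at time $t+1$, advance its path pointer, and either append it to the queue of the next edge (with $\succ$-tie-breaking against everyone else arriving at that edge at time $t+1$) or, if it just left the last edge of $\pi_i$, record $C_i(\bm\pi) = t+1 - (\text{its start time})$ and mark it finished. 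Because strategies are fixed paths, there is no choice anywhere in this process, which is exactly why the mapping is well defined; I would state this as an immediate observation.

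The substantive point — and the main obstacle — is the polynomial time bound. One must show the simulation halts after $\mathrm{poly}(|V|, n)$ time-steps. Each agent traverses a simple path, hence at most $|V|-1$ edges, each contributing a fixed delay $d_e$ plus some waiting time $w_{i,e}$. The fixed delays sum to at most $\sum_{e} d_e$, which could a priori be large in \emph{value} but is bounded by the input (the $d_e$ are given in the instance); the real worry is the waiting times. On a single edge $e$, an agent waits because agents ahead of it in $q_e$ are served first at rate $c_e$ per step; the total number of agents that can ever enter $q_e$ is at most $n$ (each agent enters $e$ at most once, since $\pi_i$ is simple), so any agent waits at most $\lceil n/c_e \rceil \le n$ steps in any one queue. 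Hence $C_i(\bm\pi) \le \sum_{e \in \pi_i}(d_e + n) \le (|V|-1)(n + \max_e d_e)$, which is polynomial in the input size (note $\max_e d_e$ is at most the instance encoding length). Consequently the last agent finishes by time-step $T^\star := \max_i(\text{start}_i) + (|V|-1)(n+\max_e d_e)$, itself polynomial, so the loop runs at most $T^\star$ iterations.

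The remaining work is to check that each iteration of the simulation is itself polynomial-time, and that the running time does not blow up across iterations. Each time-step touches each edge once (popping up to $c_e$ agents, but we can cap the bookkeeping: we only need to know \emph{which} agents exit and \emph{when}, and the $\succ$-ordering of synchronous arrivals), and each agent moves at most one edge forward per step, so a step costs $O(|E| + n \log n)$ with the sorting for tie-breaks. Multiplying by $T^\star$ steps gives a polynomial overall bound; I would remark that the whole thing is essentially a time-expanded breadth-first pass over the network, and indeed is morally the FIFO specialization of the Dijkstra-style routine of \cite{harks2016competitive}. One subtlety to address carefully: an agent that has exited edge $e$ at time $t+1$ and must enter edge $e'$ must be inserted into $q_{e'}$ in the correct $\succ$-position relative to other agents arriving at $e'$ at the same time $t+1$ — I would handle this by collecting, at each step, all agents newly arriving at each edge, sorting that batch by $\succ$, and appending the sorted batch to the tail of the queue, which preserves FIFO across time and respects $\succ$ within a time-step. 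This establishes $\textsc{Frog/Delays} \in \textsc{P}$.
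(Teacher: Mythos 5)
Your overall strategy --- a deterministic forward simulation establishes well-definedness, and a halting bound establishes membership in P --- is sound in its first half but has a genuine gap in its second half. The gap is the claim that the number of simulated time-steps $T^\star = \max_i(\mathrm{start}_i)+(|V|-1)(n+\max_e d_e)$ is polynomial in the input size. The delays $d_e$ are integers given in the instance and, under the standard binary encoding (which the paper clearly assumes: it explicitly dismisses a reduction producing $c_e\times d_e$ auxiliary edges as ``pseudo-polynomial''), a delay of value $2^k$ occupies only $k$ bits. Hence $\max_e d_e$ can be exponential in the encoding length, your assertion that ``$\max_e d_e$ is at most the instance encoding length'' is false, and a tick-by-tick simulation is only pseudo-polynomial. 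The paper's Alg.~\ref{alg:map} avoids exactly this by being event-driven: it keeps a heap of at most $n$ pending events $(t,\theta,i,e)$ with $\theta\in\{\theta_{\text{queue}},\theta_{\text{pop}}\}$, ordered by time, then event type, then priority $\succ$; each agent generates at most $2|V|$ events along its simple path, so the whole run costs $O(n|V|\log_2 n)$ heap operations regardless of the magnitudes of the $d_e$. When a queuing event reaches the top of the heap one already knows that no agent can still be inserted ahead of $i$ in $q_e$, so the waiting time $\lfloor(|q_e|-1)/c_e\rfloor$ is computed in one shot and all intervening idle time-steps are skipped.

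A secondary problem: the model allows $d_e=0$ (the counter-example of Fig.~\ref{fig:counter} uses such edges), so an agent can pop from one edge and line up on, and even pop from, several subsequent edges within the same time-step. Your single sweep per time-step, which advances each agent by at most one edge per iteration and only looks for transits ending at $t+1$, does not iterate to a fixed point within a step and would mis-schedule such chains. The event heap handles this for free because same-time events are totally ordered by $(\theta,\succ)$ with $\theta_{\text{queue}}$ before $\theta_{\text{pop}}$. Your well-definedness observation and your bound of at most $\lceil n/c_e\rceil$ waiting steps per queue are both correct and consistent with the paper; it is only the time-indexed (rather than event-indexed) main loop that has to go.
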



\begin{algorithm}[t]
  \setlength{\itemindent}{1em}
  \addtolength{\algorithmicindent}{1em}
\caption{Algorithm for mapping strategy-profiles to total-delays.}\label{alg:map}
\vspace*{0.2cm}
\textbf{Input:} \frog~$\Gamma$, strategy-profile $\bm{\pi}$.\hspace*{1.1cm}
\textbf{Output:} Total delays $C_1(\bm{\pi}),\ldots,C_n(\bm{\pi})$.\vspace*{0.1cm}\\
\textbf{Variables:} Heap of events $\CE$,~ queues $q_e$ on every edge $e$,~ integer results $C_1,\ldots,C_n$.
\vspace*{0.2cm}
\hrule
\begin{algorithmic}[1]
\STATE $\CE\leftarrow\bigcup_{i\in N}\{\left(0,\theta_{\text{queue}},i,\textbf{first}(\pi_i)\right)\}$
\WHILE{$\CE\neq\emptyset$}
	\STATE{$\left(t,\theta,i,e\right)\leftarrow\mbox{\textbf{top}}(\CE) \quad\AND\quad \mbox{\textbf{pop}}(\CE)$}
	\IF{$\theta = \theta_{\text{queue}}$} 
		\STATE{$\mbox{\textbf{push-back}}(q_e,i)$}
		\STATE {$\CE\overset{\mbox{insert}}{\xleftarrow{\hspace*{0.8cm}}}\left(t+\lfloor\frac{|q_e|-1}{c_e}\rfloor,\theta_{\text{pop}},i,e\right)$} 
		
	\ELSIF{$\theta = \theta_{\text{pop}}$} 
		\STATE{$\mbox{\textbf{pop}}(q_e)$}
		\IF{$e=\mbox{\textbf{last}}(\pi_i)$} 
			\STATE $C_i\leftarrow t+d_e$
		\ELSE		
			\STATE {$\CE\overset{\mbox{insert}}{\xleftarrow{\hspace*{0.8cm}}}\left(t+d_e,\theta_{\text{queue}},i,\mbox{\textbf{next}}(\pi_i,e)\right)$} 
		\ENDIF
	\ENDIF
\ENDWHILE
\RETURN $C_1,\ldots,C_n$.
\end{algorithmic}
\end{algorithm}
\begin{proof} 
\vspace*{-0.4cm}
This proof relies on Alg. \ref{alg:map} that is made deterministic by order $\succ$. 
 Algorithm \ref{alg:map} sequentially performs events, Dijkstra-like, along time. 
Type $\theta_{\text{queue}}$ events are when the agent is lined-up at the queue's tail.
Type $\theta_{\text{pop}}$ events are when the agent pops from the queue's head and starts the fixed transit delay on edge $e$. 
 An \emph{event}  is a quadruplet $\left(t,\theta,i,e\right)$ that occurs
 on edge $e\in E$
 when agent $i\in N$
  performs an event of type $\theta\in\left\{\theta_{\text{queue}},\theta_{\text{pop}}\right\}$ 
  at time $t\in\N_{\geq 0}$. 
As Alg. \ref{alg:map} iterates, time goes forward. Heap $\CE$ is the set of next events for the agents that did not finish their path. 
The events in heap $\CE$ are ordered according to time $t$ (lowest first), 
then type $\theta$ ($\theta_{\text{queue}}$ before $\theta_{\text{pop}}$),
and then agent $i$'s priority (high priority first).
Then, the overall next event is the top of the heap $\mbox{\textbf{top}}(\CE)$, and can be removed by $\mbox{\textbf{pop}}(\CE)$.
For every edge $e$, queue $q_e$ is maintained as agents line-up using $\mbox{\textbf{push-back}}(q_e,i)$ and pop using $\mbox{\textbf{pop}}(q_e)$. For every path $\pi_i$, we can access first edge $\textbf{first}(\pi_i)$, last edge $\mbox{\textbf{last}}(\pi_i)$, and given an edge $e\in\pi_i,e\neq\mbox{\textbf{last}}(\pi_i)$, we can access the next edge $\mbox{\textbf{next}}(\pi_i,e)$.
 
A queuing event $\left(t,\theta_{\text{queue}},i,e\right)$ develops into the popping event where agent $i$ leaves queue $q_e$ to start the fixed-delay. Crucially, when this queuing event is performed, we know that no further agents can be lined-up prior to agent $i$ in queue $q_e$, since it's the heap's top, which optimizes time and priority. Also, type $\theta_{\text{queue}}$ goes before type $\theta_{\text{pop}}$, so that current congestion is counted. Hence, we know that agent $i$ will spend time $w_{i,e}(\bm{\pi})=\lfloor\frac{|q_e|-1}{c_e}\rfloor$ in the queue. A $\theta_{\text{pop}}$ event generates either the queuing event after delay $d_e$, or total delay $C_i$.

There are at most $2|V|$ events per-agent. Moreover $|\CE|\leq n$. Consequently, Alg. \ref{alg:map} is in polynomial-time $O(n|V|\log_2(n))$.
It was tested in detail under C++.
It adapts to weighted agents by calculation of the weighted length of queues.\qed
\end{proof}

\section{Inapproximability of Best-Responses}

Theorem \ref{th:map} implies that problem \textsc{Frog/Br/Opt} is somewhere inside class NPO, and problem \textsc{Frog/Br/Dec} in class NP. In this section, we show that computing a best-response is hard, and provide two inapproximability results.
\begin{theorem}\label{th:NPcomplete}
Decision problem \textsc{Frog/Br/Dec} is NP-complete.
\end{theorem}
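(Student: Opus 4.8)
The plan is to show membership in NP first, and then NP-hardness by a reduction from \minvc{} (the minimum vertex cover decision problem), which the excerpt has flagged as the source reduction for later theorems. Membership is the easy direction: by Theorem \ref{th:map}, given a candidate path $\pi_i$ for agent $i$ we can compute $C_i(\pi_i,\bm{\pi}_{-i})$ in polynomial time via Algorithm \ref{alg:map}, and a simple $s_i$--$s_i^\ast$ path is a polynomial-size certificate; so we just guess $\pi_i$, run Algorithm \ref{alg:map} on $(\pi_i,\bm{\pi}_{-i})$, and accept iff the resulting total delay is at most $\kappa$. Hence $\textsc{Frog/Br/Dec}\in\textsc{NP}$.

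For hardness, I would build, from a graph $H=(V_H,E_H)$ and integer $k$ for which we ask whether $H$ has a vertex cover of size $\le k$, a \frog{} $\Gamma$ together with an agent $i$, an adversary profile $\bm{\pi}_{-i}$, and a threshold $\kappa$, such that agent $i$ has a response of cost $\le\kappa$ iff $H$ has a vertex cover of size $\le k$. The natural gadget: for each edge $\{u,v\}\in E_H$ install an ``edge-gadget'' in $\Gamma$ through which agent $i$'s path must pass, and route two degenerate adversary agents (one associated with $u$, one with $v$) so that agent $i$ is delayed on that gadget unless its route "selects" $u$ or "selects" $v$; selecting a vertex $x\in V_H$ means routing agent $i$ through a dedicated $x$-lane whose congestion, in turn, is shared across all edge-gadgets incident to $x$ so that paying once for $x$ covers all of them. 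The length/capacity parameters are tuned so that traversing the $x$-lane costs a fixed overhead of, say, one extra unit, while skipping a selected vertex on an edge-gadget costs one unit of queueing delay; then $i$'s optimal cost is (base length) $+$ (number of distinct lanes used), and $\kappa$ is set to base length $+\,k$. Because strategies are \emph{simple} paths, one has to design the $x$-lanes and the series of edge-gadgets so that revisiting a lane is never forced and a single simple path can realize any vertex set of size $\le k$; a standard trick is to lay the edge-gadgets out in series and give each vertex $x$ a single long "express lane" that an $s_i$--$s_i^\ast$ simple path may enter once and that shortcuts past exactly the gadgets for edges incident to $x$.

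The correctness argument then splits the usual way: from a vertex cover $S$ of size $\le k$ one constructs a simple path for $i$ that uses the $|S|$ express lanes and thereby avoids every queueing penalty, giving cost $\le$ base $+\,k=\kappa$; conversely, from any path of cost $\le\kappa$, the set of express lanes it uses must hit every edge-gadget (else $i$ incurs a penalty on an un-covered gadget pushing the cost above $\kappa$), so it is a vertex cover, and the cost accounting forces its size to be $\le k$. The main obstacle I anticipate is the FIFO/tie-breaking bookkeeping inside each edge-gadget: one must verify that the two adversary agents attached to an edge $\{u,v\}$ arrive at precisely the right time-steps relative to agent $i$'s possible arrival times, for \emph{all} of $i$'s exponentially many routing choices simultaneously, and that cross-gadget interference (agent $i$'s timing on gadget $j$ shifting its arrival at gadget $j'$) does not accumulate into spurious delays or spurious savings — this is handled by inserting fixed-delay "buffer" edges between consecutive gadgets so that $i$'s arrival time at each gadget is a fixed constant independent of earlier choices, decoupling the gadgets. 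Once the timing is decoupled and the per-gadget behavior is verified in isolation, the reduction is clearly polynomial in $|V_H|+|E_H|$, establishing NP-hardness and hence, with membership, NP-completeness. \qed
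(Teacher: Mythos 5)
Your NP-membership argument is exactly the paper's (guess a simple $s_i$--$s_i^\ast$ path, evaluate it with Algorithm \ref{alg:map} via Theorem \ref{th:map}), and your overall hardness strategy --- reduce from \minvc{} so that agent $i$ pays one unit per selected vertex and is heavily penalized on every uncovered edge, with threshold $\kappa = \text{base} + k$ --- is also the paper's. The gap sits in the one place where all the work of this theorem actually lives: the mechanism by which a single per-vertex choice of agent $i$ is ``felt'' at several other, non-adjacent locations of the network. Your express-lane idea does not deliver this. A lane with a single entry and a single exit can only bypass a \emph{contiguous} block of the serially arranged edge-gadgets, but the edges incident to a vertex $x$ of $H$ need not be contiguous in any fixed linear order of $E_H$ (and cannot be made simultaneously contiguous for all vertices). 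If instead the lane is re-entered at each incident gadget, then either every segment costs overhead (so the accounting counts bypassed edges rather than selected vertices) or only the first segment does (so nothing stops $i$ from grabbing free later segments of many lanes without ever paying for them). Worse, your fix for cross-gadget interference --- buffer edges forcing $i$'s arrival time at each gadget to be a constant independent of earlier choices --- removes the only remaining channel (timing) through which a gadget could detect whether its edge was covered, leaving the two degenerate adversaries with no signal to react to.

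The paper resolves exactly this difficulty with a dedicated construction: the Filter (Fig.~\ref{fig:filter}) and the $(M,t)$- and $M$-Backfire gadgets (Fig.~\ref{fig:backfire}, Lemmas \ref{lem:backfire} and \ref{lem:backfire-universal}), which exploit FIFO dynamics and the tie-breaking priority so that agent $x$'s crossing of a ``trigger'' edge releases a cascade of $\Theta(M)$ agents onto a distant ``bomb'' edge, whatever the time at which $x$ crossed. The choice structure is then organized per \emph{vertex} of $H$ (a chain of $\eta$ take/don't-take diamonds, Fig.~\ref{fig:minvc2br}), and each edge $\{\varphi_i,\varphi_j\}$ of $H$ contributes one $M$-Backfire from the ``don't take $\varphi_i$'' edge to the ``don't take $\varphi_j$'' edge; the construction stays polynomial ($\Theta(\eta^3)$ vertices, $\Theta(\eta^4)$ agents with $M=6\eta$, using bounded degree). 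This signalling gadget, together with the verification of its FIFO bookkeeping, is the key ingredient your plan names as ``the main obstacle'' but does not supply; without it, or an equivalent, the reduction is not established.
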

So, a polynomial-time algorithm addressing \textsc{Frog/Br/Dec} is unlikely to exist.
\begin{theorem}\label{th:APXhard}
Optimization problem \textsc{Frog/Br/Opt} is APX-hard.
\end{theorem}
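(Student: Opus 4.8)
The plan is to strengthen the reduction from \minvc~that already underlies the NP-completeness in Theorem \ref{th:NPcomplete} into an L-reduction, starting from a restricted version of vertex cover that is known to be APX-hard, namely \minvc~on bounded-degree graphs (say cubic graphs), for which the optimum is within a constant factor of the number of vertices and there is a known hardness-of-approximation constant. Concretely, I would take an instance $H=(V_H,E_H)$ of cubic vertex cover and build a \frog~instance together with an agent $i$ and an adversary profile $\bm{\pi}_{-i}$ such that the feasible best-responses of agent $i$ correspond to vertex covers of $H$, and the total delay $C_i$ equals (up to an affine transformation with controlled coefficients) the size of the corresponding cover. The gadget is the same in spirit as the one used for Theorem \ref{th:NPcomplete}: agent $i$ must route from $s_i$ to $s_i^\ast$ through a layered graph in which choosing to "pick" a vertex $v\in V_H$ costs a fixed additive delay, while for every edge $\{u,v\}\in E_H$ a congestion gadget (fed by adversary agents timed via delay-$0$ bottleneck edges) forces agent $i$ to incur an extra penalty unless it has picked $u$ or $v$; routing that is cheap iff the set of picked vertices is a cover.

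The key steps, in order, are: (i) recall/restate the gadget from the proof of Theorem \ref{th:NPcomplete} and make explicit the exact contribution of each "pick" choice and each "uncovered edge" event to $C_i$, so that $C_i(\pi_i,\bm{\pi}_{-i}) = a\cdot|\text{cover}(\pi_i)| + b\cdot(\text{number of uncovered edges}) + c$ for explicit constants $a,b,c$ with $b$ large enough that in an optimal response no edge is left uncovered; (ii) verify the first L-reduction inequality, $\mathrm{OPT}_{\textsc{Frog}}(I') \le \alpha\cdot\mathrm{OPT}_{\minvc}(H)$, using the fact that for cubic graphs $|V_H| = O(\mathrm{OPT}_{\minvc}(H))$ — this is why we restrict the source problem to bounded degree, since the additive constant $c$ and the per-vertex base cost would otherwise swamp a small optimum; (iii) exhibit the backward map: given any response $\pi_i$ for agent $i$, first "repair" it in polynomial time into a response whose picked set is an actual vertex cover without increasing cost (if some edge is uncovered, add one of its endpoints to the picked set; since $b > a$ this does not increase $C_i$), then read off the cover $S$; (iv) verify the second L-reduction inequality $|S| - \mathrm{OPT}_{\minvc}(H) \le \beta\cdot\big(C_i(\pi_i,\bm{\pi}_{-i}) - \mathrm{OPT}_{\textsc{Frog}}(I')\big)$, which follows immediately from the affine relation once uncovered edges have been repaired away, with $\beta = 1/a$. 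Composing, \textsc{Frog/Br/Opt} inherits APX-hardness.

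I expect the main obstacle to be step (i): getting the total delay to be \emph{exactly} affine in the cover size rather than merely monotone. In a FIFO-over-time model the penalty an uncovered edge inflicts on agent $i$ depends on precise arrival times in the congestion gadget, and those arrival times shift as agent $i$'s earlier routing choices (how many vertices it has already picked) change the time at which it reaches each gadget. So I would need to design the "pick" gadgets so that picking a vertex adds a delay that is independent of how many others were picked — e.g. by having agent $i$ always traverse \emph{all} vertex-gadgets but pay inside each one an amount determined solely by its own binary choice there, using long padding edges to re-synchronize arrival times before each edge-gadget (a standard "clocking" trick, using delay-$0$ bottleneck edges from the Model Discussion to inject adversary traffic at exactly the right tick). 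A secondary concern is ensuring the constructed instance stays polynomial in $|V_H|$ and $n$ and that the constant in the final inapproximability ratio is a genuine constant bounded away from $1$; both follow once the gadget sizes are $O(1)$ per vertex and per edge of $H$, which the cubic-degree restriction makes easy to guarantee. A cleaner alternative worth mentioning is to bypass the affine-relation bookkeeping entirely and instead L-reduce directly from \maxminvc~or from bounded-occurrence Max-SAT if the natural gadget aligns better with a gap-preserving cost structure; but the \minvc-based route reuses Theorem \ref{th:NPcomplete}'s construction and is therefore the one I would carry out.
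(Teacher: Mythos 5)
Your proposal matches the paper's proof in all essentials: both L-reduce from the optimization version of \minvc~on degree-bounded (degree $\leq 3$) graphs, reuse the layered gadget from Theorem~\ref{th:NPcomplete} so that a cover of size $k$ corresponds to a path of cost affine in $k$, and obtain the first L-reduction inequality from the fact that $\eta/3\leq\mathrm{OPT}_{\textsc{VC}}$ (the paper gets $\alpha=10$, $\beta=1$). Your explicit ``repair'' step and the synchronization concern are handled in the paper implicitly by choosing the backfire parameter $M$ large enough and by the timed sources in the backfire gadget, so the two arguments are the same.
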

Hence, a PTAS for \textsc{Frog/Br/Opt} would imply a PTAS for every NPO problem that admits a poly.-time constant factor approx. algorithm, which is unlikely.
\begin{theorem}\label{th:inapprox}
For any $\varepsilon\in\R_{>0}$, approximating problem 
\textsc{Frog/Br/Opt} within factor $|V|^{\frac{1}{6}-\varepsilon}$, 
and within factor ${n^{\frac{1}{7}-\varepsilon}}$, are NP-hard.
\end{theorem}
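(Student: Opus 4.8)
The reduction behind Theorem~\ref{th:NPcomplete} starts from \minvc, which has a $2$-approximation and therefore cannot by itself yield any polynomial inapproximability. The plan is to replace the source by a graph problem that is itself $N^{\Omega(1)}$-inapproximable while still being encodable by a best-response route: I would reduce from \maxminvc{} --- equivalently, from the \emph{minimum maximal independent set} problem, whose optimum on an $N$-vertex graph $H$ is, for every $\varepsilon>0$, NP-hard to approximate within $N^{1-\varepsilon}$ --- and, more precisely, from its gap promise version separating ``$H$ has a maximal independent set of size $\le s$'' from ``every maximal independent set of $H$ has size $>\rho\,s$'', with $\rho=N^{1-o(1)}$. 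From such an $H$ I would build in polynomial time a \frog{} $\Gamma_H$, an agent $i$, an adversary profile $\bm{\pi}_{-i}$ and a threshold so that $\min_{\pi_i'\in\CP_i}C_i(\pi_i',\bm{\pi}_{-i})$ is an essentially affine, increasing function of the size of the maximal independent set that agent $i$'s cheapest route is forced to ``pick''. The gap $\rho$ on $H$ then reappears as a multiplicative gap on the best-response cost --- which is well defined by Theorem~\ref{th:map} --- and that is exactly what an inapproximability of \textsc{Frog/Br/Opt} requires; a $\rho$-approximate best response would decide the promise.

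The construction refines the ``selection $+$ local check'' skeleton of Theorem~\ref{th:NPcomplete} with three blocks chained along agent $i$'s route. A \emph{selection block}: a layered chain of $N$ two-arc gadgets, gadget $k$ letting the route choose ``$v_k\in S$'' or ``$v_k\notin S$'', with the ``in'' arc costing $\delta$ more than the ``out'' arc, so the block costs $\delta|S|$ up to a constant. A \emph{domination block}: for each vertex $w$ of $H$, a check built like the \minvc{} check of Theorem~\ref{th:NPcomplete}, positioned so that, unless the route already picked $w$ or a neighbour of $w$, pre-placed adversary agents pile up in front of agent $i$ and inflict a prohibitive delay $D\gg\delta N$; this forces $S$ to dominate $H$. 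An \emph{independence block}: for each edge $uv$ of $H$, the dual check, inflicting the same $D$ whenever the route picked \emph{both} $u$ and $v$; this forces $S$ to be independent. Calibrating the $(c_e,d_e)$, the adversary routes and $\succ$, routes picking an independent dominating set $S$ cost exactly $\delta|S|+O(1)$ while every other route costs at least $D>\delta N$, so the cheapest route selects a \emph{minimum} maximal independent set of $H$. Finally one tracks sizes: with the selection block of size $O(N)$, $N$ domination checks and $N^{2}$ independence checks, each amplified by the factor needed to realise $D=\Theta(\delta N)$ and to keep the relevant detours mutually distinguishable, $\Gamma_H$ has $|V|=\Theta(N^{c_1})$ vertices and $n=\Theta(N^{c_2})$ agents; the construction is to be dimensioned so that $c_1\le 6$ and $c_2\le 7$, and then $\rho=N^{1-o(1)}$ is at least $|V|^{1/6-\varepsilon}$ and at least $n^{1/7-\varepsilon}$ once the $o(1)$ exponents and the source's $\varepsilon$ are folded into a fresh $\varepsilon$. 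That $n$ carries the smaller exponent reflects that the reduction spends more adversary agents --- for the per-edge independence checks and the amplification of $D$ --- than it spends vertices.

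The main obstacle is the \emph{independence block}. In a best-response instance only agent $i$'s route varies, so the extra $D$ charged to ``$u$ and $v$ both picked'' cannot come from moving other agents directly; it must come from the feedback that agent $i$'s own early arc choices exert on the queues (a single step pushed onto a waiting adversary agent can cascade downstream back onto agent $i$), and this cascade has to fire for the pair $(u,v)$ and for no other configuration of $S$. A naive design in which agent $i$'s arrival time would encode all of $S$ and one adversary agent guards each forbidden set is exponential; the intended fix is to route agent $i$ so that picking $v$ opens a dedicated, padded fast lane in precisely the per-edge checks incident to $v$, keeping the gadget polynomial. How economically this can be done, together with the bookkeeping that rules out ``cheating'' routes which violate both the domination and the independence conditions yet still beat $D$ --- and which pins down the admissible ranges of $\delta$, $D$, the capacities and $\succ$ --- is what ultimately fixes the exponents $\tfrac{1}{6}$ and $\tfrac{1}{7}$.
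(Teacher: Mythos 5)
Your overall strategy --- swap the source of the Theorem~\ref{th:NPcomplete} reduction for a graph problem that is itself $N^{1-\varepsilon}$-inapproximable, keep the ``selection arcs $+$ backfire checks'' skeleton, and read the exponents $\tfrac{1}{6}$ and $\tfrac{1}{7}$ off the polynomial blow-up in $|V|$ and $n$ --- is exactly the paper's strategy. The paper, however, reduces from \mincoloring{} ($\eta^{1-\varepsilon}$-hard by Feige--Kilian/Zuckerman): agent $x$'s path assigns a color to every vertex, opening a new color class costs one amplified unit, same-colored neighbors are forbidden by $M$-Backfires, and non-neighbors are \emph{discounted}, i.e.\ their backfires are disarmed by delaying the relevant $r_1$ agents; with $M=3\eta$ this gives $\Theta(\eta^6)$ vertices and $\Theta(\eta^7)$ agents, whence the stated factors. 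Your source, minimum maximal independent set (Halld\'orsson's $N^{1-\varepsilon}$ hardness), is a legitimate alternative with the right gap, so the route is not wrong in principle.

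Two genuine gaps remain. First, your \emph{independence block} is actually the easy part: an $M$-Backfire with trigger on the ``pick $u$'' arc and bomb on the ``pick $v$'' arc already charges $D$ precisely when both are crossed, so the cascade machinery you worry about is unnecessary. The hard part is the \emph{domination block}: ``punish unless at least one vertex of $N[w]$ was picked'' is a wide disjunction, whereas a backfire fires on a \emph{single} trigger. You need the converse gadget --- a check armed by default and \emph{disarmed} by any one of up to $N$ earlier arc choices --- which is exactly the discount mechanism the paper is forced to introduce for \mincoloring; your ``dedicated padded fast lane'' names it but does not construct it, and it is where the capacities, timings and tie-breaking must actually be pinned down. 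Second, the exponents in the statement are nothing but the size of the construction, and you never do that count: ``to be dimensioned so that $c_1\le 6$ and $c_2\le 7$'' is the theorem, not a step of its proof. Each $M$-Backfire costs $\Theta(M^2)$ vertices and $\Theta(M^3)$ agents, you need up to $\Theta(N^2)$ of them plus the disarmable domination checks, plus the amplification that fixes the affine-versus-linear issue (the unavoidable additive $\Theta(N)$ transit cost swamps $\delta|S|$ when $|S|$ is small unless all selection costs are multiplied by a large $M$, as the paper does); until those factors are multiplied out you cannot claim $|V|^{1/6-\varepsilon}$ and $n^{1/7-\varepsilon}$ rather than some weaker $|V|^{1/c_1-\varepsilon}$.
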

In plain words, it would take an intractable amount of time for an agent to find a path within factor $|V|^{\frac{1}{7}}$ or ${n^{\frac{1}{8}}}$ of the shortest delay.
A more realistic model may rather drop rationality, and be better based on agents using heuristics.

Before the proofs, a good rule of thumb to distinguish between easy and hard path problems, is whether Bellman's Principle of Optimality is satisfied, or if preference inversions violate the principle\footnote{This assertion is only a rule of thumb, since no state-space is actually precised.}. We introduce a gadget game\footnote{A busy reader can, after a quick look to Fig. \ref{fig:backfire}, jump straight to Lem. \ref{lem:backfire-universal}.}. 

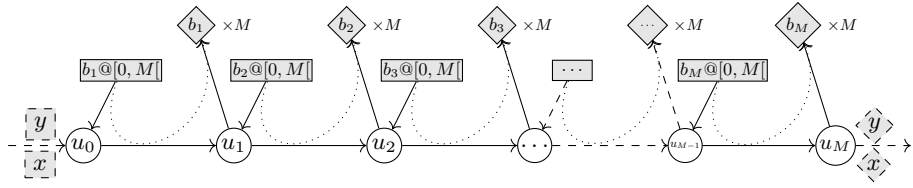
\begin{figure}[!t]
\centering
\begin{tikzpicture}
\node[circle,draw=black, inner sep=0.1em] (u0) at (0,0) {$u_0$};
\node[circle,draw=black, inner sep=0.1em] (u1) at (2,0) {$u_1$};
\node[circle,draw=black, inner sep=0.1em] (u2) at (4,0) {$u_2$};
\node[circle,draw=black, inner sep=0.1em] (u3) at (6,0) {$\ldots$};
\node[circle,draw=black, inner sep=0.1em,scale=0.5] (u4) at (8,0) {$u_{M-1}$};
\node[circle,draw=black, inner sep=0.1em,scale=0.9] (u5) at (10,0) {$u_M$};
\node[rectangle,draw=black,fill=black!10, inner sep=0.1em,scale=0.75] 
(v0) at (0.5,1) {$b_1$@$[0,M[$};
\node[rectangle,draw=black,fill=black!10, inner sep=0.1em,scale=0.75] 
(v1) at (2.5,1) {$b_2$@$[0,M[$};
\node[rectangle,draw=black,fill=black!10, inner sep=0.1em,scale=0.75] 
(v2) at (4.5,1) {$b_3$@$[0,M[$};
\node[rectangle,draw=black,fill=black!10, inner sep=0.5em,scale=0.75] 
(v3) at (6.5,1) {$\ldots$};
\node[rectangle,draw=black,fill=black!10, inner sep=0.1em,scale=0.75] 
(v4) at (8.5,1) {$b_M$@$[0,M[$};
\node[diamond,draw=black,fill=black!10, inner sep=0.1em,scale=0.75] 
(w1) at (1.5,1.6) {$b_1$};\node[right=0cm of w1,scale=0.67]{$\times M$};
\node[diamond,draw=black,fill=black!10, inner sep=0.1em,scale=0.75] 
(w2) at (3.5,1.6) {$b_2$};\node[right=0cm of w2,scale=0.67]{$\times M$};
\node[diamond,draw=black,fill=black!10, inner sep=0.1em,scale=0.75] 
(w3) at (5.5,1.6) {$b_3$};\node[right=0cm of w3,scale=0.67]{$\times M$};
\node[diamond,draw=black,fill=black!10, inner sep=0.5em,scale=0.5] 
(w4) at (7.5,1.6) {$\ldots$};\node[right=0cm of w4,scale=0.67]{$\times M$};
\node[diamond,draw=black,fill=black!10, inner sep=0.1em,scale=0.75] 
(w5) at (9.5,1.6) {$b_M$};\node[right=0cm of w5,scale=0.67]{$\times M$};
\draw[] 
	(u0) edge[->] (u1)
	(u1) edge[->] (u2)
	(u2) edge[->] (u3)
	(u3) edge[->,dashed] (u4)
	(u4) edge[->] (u5);
\draw[] 
	(v0) edge[->] (u0)
	(v1) edge[->] (u1)
	(v2) edge[->] (u2)
	(v3) edge[->,dashed] (u3)
	(v4) edge[->] (u4);
\draw[] 
	(u1) edge[->] (w1)
	(u2) edge[->] (w2)
	(u3) edge[->] (w3)
	(u4) edge[->,dashed] (w4)
	(u5) edge[->] (w5);
\draw[dotted]
	(v0) edge[->,out=250,in=290,looseness=3.1] (w1)
	(v1) edge[->,out=250,in=290,looseness=3.1] (w2)
	(v2) edge[->,out=250,in=290,looseness=3.1] (w3)
	(v3) edge[->,out=250,in=290,looseness=3.1] (w4)
	(v4) edge[->,out=250,in=290,looseness=3.1] (w5);
\node[rectangle, dashed, draw=black, fill=black!10, above left=-1mm and 2mm of u0](y){$y$};
\node[rectangle, dashed, draw=black, fill=black!10, below left=-1mm and 2mm of u0](x){$x$};
\node[diamond, dashed, draw=black, fill=black!10, inner sep=0.1em,  above right=-0.5mm and 2mm of u5](y){$y$};
\node[diamond, dashed, draw=black, fill=black!10, inner sep=0.1em, below right=-0.5mm and 2mm of u5]{$x$};
\draw[->,dashed] (-1,0) -- (u0);
\draw[->,dashed] (u5) -- (11,0);

\end{tikzpicture}
\caption{\emph{Filter: Agent $y$ is not delayed, but agent $x$ is delayed by at least $M\in\N_{\geq 1}$ steps.} Circles, rectangles and diamonds are resp. vertices, sources and sinks. The idea is that agent $x$ waits one step on every edge $(u_i,u_{i+1})$, but $y$ does not wait. After symbol @ is the source's starting time. An agent $b_i$ starts for every $i\in[1,M]$ and $t\in[0,M[$, hence the number of agents is in $\Theta(M^2)$. The edges have capacity one and fixed-delay zero. Priority $\succ$ satisfies $y\succ b_i\succ x$, for any $i$ (and time) defined in the Figure.}\label{fig:filter}
\end{figure}

\begin{figure}[!t]
\centering
\begin{tikzpicture}
\node[circle, draw=black, inner sep=-0.01em] (w1) at (1,1) {$w_1$};
\node[circle, draw=black, inner sep=-0.01em] (w2) at (2,1) {$w_2$};
\node[circle, draw=black, inner sep=-0.01em] (w3) at (10,1) {$w_3$};
\node[circle, draw=black, inner sep=-0.01em] (w4) at (11,1) {$w_4$};
\draw[->,dotted] (0,1.5) -- (w1);
\draw[->,dotted] (0,0.5) -- (w1);
\scriptsize 
\node[rectangle,draw=black,dotted, align=center] 
(graph) at (6,1) {Any acyclic digraph, with from\\  $w_2$ to $w_3$ a delay at least one.};
\normalsize
\draw[->,thick] (w1) -- node[below = 0.3cm]{$t$-\textbf{trigger}} (w2);
\draw[dotted] (w2) -- (graph);
\draw[->, dotted] (graph) -- (w3);
\draw[->, thick] (w3) -- node[below = 0.3cm]{\textbf{bomb}} (w4);
\draw[->,dotted] (w4) -- (12,1.5);
\draw[->,dotted] (w4) -- (12,0.5);
\node[circle,draw=black, inner sep=-0.01em] (u0) at (2,5) {$u_0$};  
\node[circle,draw=black, inner sep=-0.01em] (u1) at (3.5,5) {$u_1$}; 
\node[circle,draw=black, inner sep=-0.01em] (u2) at (5,5) {$u_2$};
\node[circle,draw=black, inner sep=-0.01em] (u3) at (6.5,5) {$u_3$}; 
\node[circle,draw=black, inner sep=-0.0em,scale=0.6] (u4) at (8.5,5) {$u_{M-2}$}; 
\node[circle,draw=black, inner sep=-0.0em,scale=0.7] (u5) at (10,5) {$u_{M-1}$};  
\node[circle,draw=black, inner sep=-0.0em] (u6) at (11.5,5) {$u_M$};  
\node[circle,draw=black, inner sep=0.2em] (u0b) at (2.5,5) {~};  
\node[circle,draw=black, inner sep=0.2em] (u1b) at (4,5) {~}; 
\node[circle,draw=black, inner sep=0.2em] (u2b) at (5.5,5) {~};
\node[circle,draw=black, inner sep=0.2em] (u4b) at (9.2,5) {~}; 
\node[circle,draw=black, inner sep=0.2em] (u5b) at (10.7,5) {~};  
\node[circle,draw=black, inner sep=-0.01em] (v1) at (3.5,4) {$v_1$}; 
\node[circle,draw=black, inner sep=-0.01em] (v2) at (5,4) {$v_2$};
\node[circle,draw=black, inner sep=-0.01em] (v3) at (6.5,4) {$v_3$}; 
\node[circle,draw=black, inner sep=-0.01em] (v4) at (8.5,4) {}; 
\node[circle,draw=black, inner sep=-0.0em,scale=0.7] (v5) at (10,4) {$v_{M-1}$};  
\node[circle,draw=black, inner sep=-0.0em] (v6) at (11.5,4) {$v_{M}$};
\draw[->] (u0) -- (u0b); \draw[->] (u0b) -- (u1);
\draw[->] (u1) edge[->] (u1b); \draw[->] (u1b) -- (u2);
\draw[->] (u2) edge[->] (u2b); \draw[->] (u2b) -- (u3);
\draw[->,dashed] (u3) -- (u4);
\draw[->] (u4) edge[->] (u4b); \draw[->] (u4b) -- (u5);
\draw[->] (u5) edge[->] (u5b); \draw[->] (u5b) -- (u6);
\draw[->] (u1) -- (v1);
\draw[->] (u2) -- (v2);
\draw[->] (u3) -- (v3);
\draw[->,dashed] (u4) -- (v4);
\draw[->] (u5) -- (v5);
\draw[->] (u6) -- (v6);
\scriptsize 
\node[rectangle,draw=black,align=center,rotate=90] (filter) at (2,3.2) 
{Agent $r_{1}$ crosses with\\ delay $0$, but agent $x$\\ is delayed  by $M$.};
\normalsize
\node[below right = 2cm and 0cm of filter,rotate=90]{\textbf{filter}};
\draw[] (w2) -- (filter);
\draw[->] (filter) -- (u0);
\draw[->] (v1) edge [out=270, in=140] (w3);
\draw[->] (v2) edge [out=270, in=120] (w3);
\draw[->] (v3) edge [out=270, in=100] (w3);
\draw[->,dashed] (v4) edge [out=270, in=90] (w3);
\draw[->] (v5) edge [out=270, in=80] (w3);
\draw[->] (v6) edge [out=240, in=70] (w3);
%
\node[rectangle,draw=black,fill=black!10,dashed,inner sep=0.1em,left = 0.2cm of w1] 
(xsource) {$x$@?};
\draw[->,dotted] (xsource) -- (w1);
\node[rectangle,draw=black,fill=black!10,above=0.05cm of w1, inner sep=0.1em] 
(r0) {$r_0$@$t$};
\node[rectangle,draw=black,fill=black!10,above=0.05cm of r0, inner sep=0.1em,scale=0.57] 
(r1) {$r_1$@$t+1$};
\node[diamond,draw=black,fill=black!10,dashed,right = 0.3cm of w4,inner sep=0.1em] 
(xsink) {$x$};
\draw[->,dotted] (w4) -- (xsink);
\node[diamond,draw=black,fill=black!10,inner sep=0.0em,above right = 0.02cm and 0.02cm of w2] {$r_0$};
\node[rectangle,draw=black,fill=black!10,above right=0.2cm and -0.25cm of u0b, inner sep=0.1em,scale=0.67] (b1)  {$b_1$@$t+1$};
\draw[->] (b1) -- (u0b);
\node[rectangle,draw=black,fill=black!10,above right=0.2cm and -0.25cm of u1b, inner sep=0.1em,scale=0.67] (b2)  {$b_2$@$t+1$};
\draw[->] (b2) -- (u1b);
\node[rectangle,draw=black,fill=black!10,above right=0.2cm and -0.25cm of u2b, inner sep=0.1em,scale=0.67] (b3)  {$b_3$@$t+1$};
\draw[->] (b3) -- (u2b);
\node[rectangle,draw=black,fill=black!10,above right=0.2cm and -0.6cm of u4b, inner sep=0.1em,scale=0.55] (b5)  {$b_{M-1}$@$t+1$};
\draw[->] (b5) -- (u4b);
\node[rectangle,draw=black,fill=black!10,above right=0.2cm and -0.5cm of u5b, inner sep=0.1em,scale=0.67] (b6)  {$b_{M}$@$t+1$};
\draw[->] (b6) -- (u5b);
\node[diamond,draw=black,fill=black!10,inner sep=0.0em,above right = 0.02cm and 0.02cm of u6] {$r_1$};
\node[diamond,draw=black,fill=black!10,inner sep=0.0em,above right= 0.1cm and 0.0cm of w4,scale=0.67] (bmsink) {$b_M$};
\node[diamond,dashed, draw=black!50,fill=black!5,inner sep=0.1em,above = 0.0cm of bmsink,scale=0.8] (b2sink) {$\ldots$};
\node[diamond,draw=black,fill=black!10,inner sep=0.0em,above = 0.0cm of b2sink,scale=0.8] (b1sink) {$b_1$};
%
\node[rectangle,draw=black,fill=black!10,above=0.5cm of u1, inner sep=0.1em,scale=0.67] (m1)  {$m_1$@$t+2$};
\draw[->] (m1) -- (u1);
\node[scale=0.6,inner sep=0.0em,left = -0.2mm of m1]{$M\times$};
\node[diamond,draw=black,fill=black!10,inner sep=0.0em,below right= 0.2mm and 0.2mm of v1,scale=0.67] (m1sink) {$m_1$};
\node[right = 0mm of m1sink,inner sep=0.0em,scale=0.5] {$\times M$}; 
\node[rectangle,draw=black,fill=black!10,above=0.5cm of u2, inner sep=0.1em,scale=0.67] (m2)  {$m_2$@$t+2$};
\draw[->] (m2) -- (u2);
\node[scale=0.6,inner sep=0.0em,left = -0.2mm of m2]{$M\times$};
\node[diamond,draw=black,fill=black!10,inner sep=0.0em,below right= 0.2mm and 0.2mm of v2,scale=0.67] (m2sink) {$m_2$};
\node[right = 0mm of m2sink,inner sep=0.0em,scale=0.5] {$\times M$}; 
\node[rectangle,draw=black,fill=black!10,above=0.5cm of u3, inner sep=0.1em,scale=0.67] (m3)  {$m_3$@$t+2$};
\draw[->] (m3) -- (u3);
\node[scale=0.6,inner sep=0.0em,left = -0.2mm of m3]{$M\times$};
\node[diamond,draw=black,fill=black!10,inner sep=0.0em,below right= 0.2mm and 0.2mm of v3,scale=0.67] (m3sink) {$m_3$};
\node[right = 0mm of m3sink,inner sep=0.0em,scale=0.5] {$\times M$}; 
\draw[->,dashed] (8.5,5.8) -- (u4);
\node[rectangle,draw=black,fill=black!10,above=0.35cm of u5, inner sep=0.1em,scale=0.67] (m5)  {$m_{M-1}$@$t+2$};
\draw[->] (m5) -- (u5);
\node[scale=0.6,inner sep=0.0em,left = -0.2mm of m5]{$M\times$};
\node[diamond,draw=black,fill=black!10,inner sep=0.0em,below right= 0.2mm and 0.2mm of v5,scale=0.4] (m5sink) {$m_{M-1}$};
\node[right = 0mm of m5sink,inner sep=0.0em,scale=0.5] {$\times M$}; 
\node[rectangle,draw=black,fill=black!10,above=0.4cm of u6, inner sep=0.1em,scale=0.67] (m6)  {$m_{M}$@$t+2$};
\draw[->] (m6) -- (u6);
\node[scale=0.6,inner sep=0.0em,right = -0.2mm of m6]{$\times M$};
\node[diamond,draw=black,fill=black!10,inner sep=0.0em,below right= 0.2mm and 0.2mm of v6,scale=0.4] (m6sink) {$m_{M-1}$};
\node[right = 0mm of m6sink,inner sep=0.0em,scale=0.5] {$\times M$}; 

\end{tikzpicture}
\caption{An $(M,t)$-Backfire, where $M\in\N_{\geq 1}$ is some large number and $t\in\N_{\geq 0}$ is a time-step. Circles, rectangles and diamonds are resp. vertices, sources and sinks. After symbol @ is the source's starting time.
The edges that are plainly depicted (or dashed) have capacity one and fixed-delay zero. The filter is depicted in Fig. \ref{fig:filter}. Let priority $\succ$ satisfy $r_i\succ m_j\succ b_k\succ x$, for any $i,j,k$ defined in the figure.
One can connect to any digraph from the trigger to the bomb, if the minimum delay from $w_2$ to $w_3$ is one. \emph{Agent $x$ gets heavily delayed on the bomb if and only if he uses the trigger on time $t$.}}\label{fig:backfire}
\end{figure}
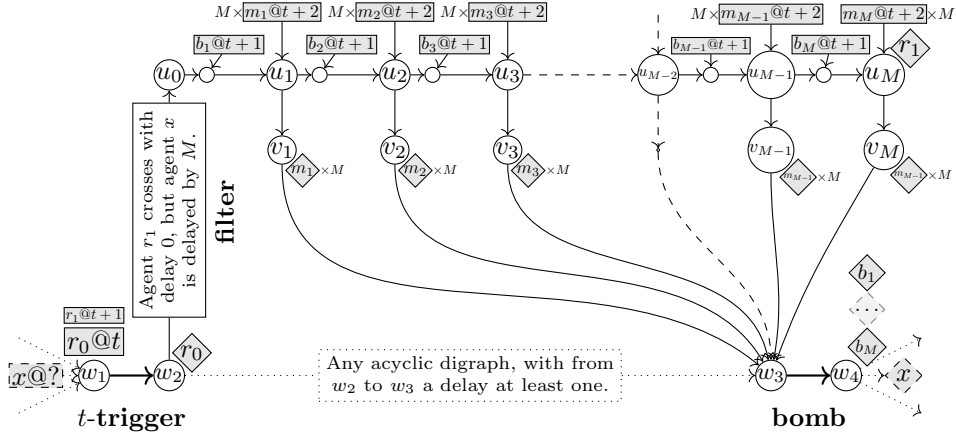

\begin{definition}
An $(M,t)$-Backfire is a piece of \frog, defined as in Fig. \ref{fig:backfire}. 
\end{definition}
\begin{lemma}\label{lem:backfire}
In an $(M,t)$-Backfire, if agent $x$ arrives on the $t$-trigger at time $t$,\\ then on the bomb, $M$ agents arrive at time $t+1$, and massively delay agent $x$.\\ Otherwise, if $x$ arrives at a different time, then this Backfire does not delay $x$.
Furthermore, the backfire contains $\Theta(M)$ vertices and $\Theta(M^2)$ agents.  
\end{lemma}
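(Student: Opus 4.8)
The plan is to run Algorithm~\ref{alg:map} symbolically on the $(M,t)$-Backfire and follow, event by event, the agents $r_0$, $r_1$, the $b_i$'s, the $M$ copies of each $m_i$, and $x$; the whole statement then reduces to a case split on the time $\tau$ at which $x$ enters the queue of the $t$-trigger edge $(w_1,w_2)$. I would first isolate, as a self-contained sub-claim, the behaviour of the filter of Fig.~\ref{fig:filter}: since the streams $b_i$@$[0,M[$ keep the vertices of the chain $u_0\to\cdots\to u_M$ busy for one extra step as seen by the lowest-priority agent, an agent entering the chain \emph{on the critical step} is bumped one step back at each of the $M$ edges by the tie-break $b_i\succ x$ and loses exactly $M$ steps, while a higher-priority companion $y$ (with $y\succ b_i$) loses none; an agent entering \emph{off} that step never meets this congestion and loses nothing. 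Concretely this is a short induction along the chain on the pop-times of the two agents.

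For the ``bad'' case $\tau=t$: agent $r_0$@$t$ occupies the trigger queue and, since $r_0\succ x$, pushes $x$ to pop $(w_1,w_2)$ at time $t+1$; this one-step shift is exactly what makes $x$ reach the filter on its critical step and $r_1$ on a harmless one, so by the sub-claim $r_1$ races ahead along the backfire's $u_i$-chain while $x$ is held for $M$ steps. Tracing the $b_i$'s and the $m_i$'s against the priority $r_i\succ m_j\succ b_k\succ x$, I would show that in this configuration all $M$ agents $b_1,\dots,b_M$ reach $w_3$ together and enter the bomb at time $t+1$ (the first assertion of the Lemma), that more agents (the $m_i$-copies) pour onto the bomb right behind them, and that this FIFO queue on the bomb (capacity $1$, delay $0$) is still nonempty when $x$ finally reaches $w_3$; having the lowest priority, $x$ then waits $\Omega(M)$ further steps.

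For the ``good'' case $\tau\neq t$: now $r_0$ is no longer in the trigger queue at the same instant, so $x$ pops the trigger at $\tau$ undelayed and reaches the filter \emph{off} its critical step; by the sub-claim it crosses with zero delay and, continuing along the chain, arrives at $w_3$ at a time $\neq t+1$, disjoint from the window during which the bomb's queue is congested, so the bomb --- hence the whole backfire --- costs $x$ nothing. I would close this case by checking that none of the auxiliary congestions (the $m_i$-copies on their own edges $u_i\to v_i$, the internal $b$-streams of the filter, the blocker $r_0$) can reach the route actually taken by $x$. The counting is then immediate: the backfire's and the filter's chains contribute $\Theta(M)$ vertices, while the agents are dominated by the filter's $b_i$@$\tau$ (one for each of the $M$ indices and $M$ start-times) and the $M$ groups of $M$ copies $m_i$, that is $\Theta(M^2)$ agents in all.

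The main obstacle is that the Lemma is a tight ``if and only if'', so the event bookkeeping in the two cases has to be carried out to the step: a single off-by-one would turn it into a one-sided implication, and in the good case one genuinely has to rule out any secondary queue --- in the filter, at the $u_i$'s, on the $m_i$-edges --- spilling back onto the path of $x$. The priority $r_i\succ m_j\succ b_k\succ x$ fixed in Fig.~\ref{fig:backfire} is precisely what makes every contested queue resolve the way the argument needs, and making that resolution explicit at each such queue is the crux.
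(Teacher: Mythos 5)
Your skeleton (a case split on whether $x$ queues on the trigger at time $t$, an isolated sub-claim for the filter, and the $\Theta(M)$-vertices/$\Theta(M^2)$-agents count) matches the paper's, and the counting is correct. But the causal mechanism you describe is not the one the gadget of Fig.~\ref{fig:backfire} implements, and as written it is internally inconsistent. Agent $x$ never traverses the filter or the $u_i$-chain: its route is trigger $\to$ (the acyclic digraph) $\to$ bomb, and the filter's sole purpose is to make the backfire's internal wiring unusable as a shortcut for $x$. The trigger works by making $x$ delay $r_1$, not by shifting $x$ onto the filter's critical step: when $x$ queues behind $r_0$ at time $t$ on the capacity-one trigger edge, $r_1$ (arriving at $t+1$) pops one step late, crosses the filter with zero delay, and reaches the $u_i$-chain too late to catch the agents $b_i$; the $b_i$ therefore pass each $u_i$ one step before the $M$ copies of $m_i$, reach $w_3$ at time $t+1$, and it is this queue of $M$ agents on the bomb edge --- not the filter --- that delays the lowest-priority agent $x$.

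Your bad case has $r_1$ ``racing ahead along the $u_i$-chain,'' which is exactly backwards: an undelayed $r_1$ blocks each $b_i$ by one step, the $b_i$ then collide at $u_i$ with the higher-priority $m_i$'s, reach $w_3$ only around $t+1+M$, and no bomb forms --- that is the \emph{good} case. Followed literally, your bad-case narrative therefore predicts that the $b_i$'s are blocked and the bomb never materializes, contradicting your own conclusion that they all enter the bomb at $t+1$. Symmetrically, in the good case you have $x$ crossing the filter ``off its critical step'' and you treat the $m_i$'s as auxiliary congestion to be ruled out, whereas the $m_i$ collision is the essential defusing mechanism (it is what pushes the $b_i$'s to $t+1+M$, far past $x$'s passage of $w_3$) and $x$ is simply never on those edges. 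This is not an off-by-one in the bookkeeping but a swap of the roles of $x$ and $r_1$ and of the filter's function, so the argument needs to be rebuilt around the correct causal chain.
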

\begin{proof}[Lemma \ref{lem:backfire}]
If agent $x$ does not trigger on time $t$, 
then agent $r_1$ makes every agent $b_i$ wait one step.
Hence, every agent $b_i$ collides on $u_i$ at $t+2$ with $M$ agents $m_i$ who have priority.
Agents $b_i$ finally arrive on $w_3$ at time $t+1+M$, way too late to delay anyone (assuming large $M$).
If agent $x$ triggers on time $t$, then he gets queued after agent $r_0$, and agent $r_1$ has to wait one step. Then agent $r_1$ arrives too late on vertices $u_i$ to delay any agent $b_i$. Consequently, agents $b_i$ arrive on $u_i$ one step before $m_i$, don't get delayed, and arrive on $w_3$ at $t+1$. \qed
\end{proof}

\begin{definition}
An $M$-Backfire is a sequence of $(M,t)$-Backfires, for $0\leq t\leq M$, that share the same trigger-edge and bomb-edge.
Agent $r_0$ is removed everywhere but for $t=0$, because for $t\geq 1$, its role in the $(M,t)$-Backfire is played by $r_1$ from the $(M,t-1)$-Backfire. 
(There is one $r_1$ per-time-step in $[0,M]$.)
\end{definition}
\begin{lemma}\label{lem:backfire-universal}
With an $M$-Backfire, agent $x$ is massively delayed on the bomb-edge (assuming large $M$), if and only if he crosses the trigger-edge (anytime in $[0,M]$).\\ Furthermore, the backfire contains $\Theta(M^2)$ vertices and $\Theta(M^3)$ agents.  
\end{lemma}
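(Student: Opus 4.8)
The plan is to obtain Lemma~\ref{lem:backfire-universal} from Lemma~\ref{lem:backfire} one layer at a time; the only genuine work is checking that superimposing the $M{+}1$ pieces along the shared trigger-edge, the shared bomb-edge, and the chain of $r$-agents does not disturb the behaviour of any individual $(M,t)$-Backfire. Recall the construction: the layers $t=0,\dots,M$ all reuse the single trigger-edge $(w_1,w_2)$ and the single bomb-edge $(w_3,w_4)$, their ``middle'' parts (the filter, the $u$--$v$ graph, the $b$- and $m$-agents of that layer) are pairwise disjoint, and for $t\ge1$ the agent $r_1$ of layer $t{-}1$ doubles as the agent $r_0$ of layer $t$. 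Write $\rho_k$ for the agent of this chain that departs from $w_1$ at time $k$, so $\rho_0=r_0@0$ and, for $k\ge1$, $\rho_k$ is at once the $r_1$ of layer $k{-}1$ and the $r_0$ of layer $k$. Note that on the bomb-edge the only agents that ever appear are $x$ and the $b$-agents, since every $r$-agent and every $m$-agent has its sink inside some middle part.

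I would first analyse the queue of the trigger-edge in isolation. If $x$ never reaches $w_1$ at a time in $[0,M]$, then $\rho_k$ is alone in that queue at time $k$ and pops at time $k$ for all $1\le k\le M{+}1$, so in every layer $t$ the agent $r_1=\rho_{t+1}$ crosses the trigger on time; Lemma~\ref{lem:backfire} then tells us that all $b$-agents of layer $t$ are massively delayed, hence no $b$-agent reaches $w_3$ within the first $\Theta(M)$ steps, the bomb-edge queue stays empty in the relevant window, and $x$ (if its path uses the bomb-edge at all) crosses it with no delay. If instead $x$ reaches $w_1$ at time $t_x\in[0,M]$, then at time $t_x$ that queue is $(\rho_{t_x},x)$ --- $\rho_{t_x}$ first by priority $r\succ x$ --- so $\rho_{t_x}$ pops at $t_x$, $x$ pops at $t_x{+}1$, and this one-step delay runs down the chain: since a fresh $\rho$ joins each step and the capacity is one, $\rho_k$ pops at $k{+}1$ for every $k\ge t_x{+}1$ and the delay never resorbs. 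In particular $\rho_{t_x+1}$, the $r_1$ of layer $t_x$, is delayed by exactly one step, which is precisely the ``triggered'' case of Lemma~\ref{lem:backfire} for layer $t_x$: its $M$ $b$-agents are not delayed and arrive at $w_3$ at time $t_x{+}1$. As $x$ leaves the trigger at $t_x{+}1$ and the minimum delay from $w_2$ to $w_3$ is one, $x$ cannot reach $w_3$ before $t_x{+}2$, so at least $M{-}1$ of those agents sit ahead of it on the capacity-one bomb-edge and $x$ waits $\Omega(M)$ steps there.

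The delicate point, and the step I expect to be the main obstacle, is that the same propagation also delays the $r_1$ of every later layer $t'>t_x$, so those layers behave as in the triggered case even though $x$ never reaches $w_1$ at their parameter time $t'$; consequently their $b$-agents are \emph{not} delayed and pour into $w_3$ at the successive times $t'{+}1$. I would neutralise this by monotonicity of FIFO queues: these extra agents all arrive strictly after $t_x{+}1$ and merely lengthen the bomb-edge queue that already holds the $M$ agents of layer $t_x$, and adding agents to a FIFO queue never decreases the wait of an agent already behind them, so $x$'s delay on the bomb-edge can only increase. Symmetrically the earlier layers $t'<t_x$ are untouched (the propagation begins at $t_x$), their $b$-agents remain massively delayed and do not clog $w_3$ before $t_x{+}1$ --- and even if they did it would only help. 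Combining the two cases gives the equivalence: $x$ is massively delayed on the bomb-edge iff its path crosses the trigger-edge within the window $[0,M]$.

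Finally, the size estimates are pure bookkeeping. The $M$-Backfire is the union of the $\Theta(M)$ layers, each contributing $\Theta(M)$ vertices and $\Theta(M^2)$ agents by Lemma~\ref{lem:backfire}; the identifications remove only the four vertices $w_1,\dots,w_4$ and $O(M)$ duplicated $r$-agents in total, both lower-order. Hence the $M$-Backfire has $\Theta(M^2)$ vertices and $\Theta(M^3)$ agents, as claimed.
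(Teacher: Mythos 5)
Your proof is correct and rests on exactly the same idea as the paper's (one-line) argument: once $x$ delays $\rho_{t_x}$'s successor by one step, the capacity-one trigger queue propagates that delay to every later $r_1$, so every layer $t'\geq t_x$ fires, and the extra $b$-agents from later layers can only lengthen the bomb queue. You simply spell out the details (the untriggered direction, the monotonicity argument for the later layers, and the size count) that the paper leaves implicit.
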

\begin{proof}[Lemma \ref{lem:backfire-universal}]
Assume that agent $x$ triggers on time $t$; all subsequent agents $r_1$ get delayed by one: an $(M,t')$-Backfire gets triggered for every $t'\geq t$.\qed
\end{proof}
\begin{figure}[!t]
\centering
\begin{tikzpicture}
\node[circle,draw=black, inner sep=0.2mm] (u0) at (1,0) {$u_0$};
\foreach \x in {0,2,...,8}
{
	\pgfmathtruncatemacro{\labelu}{\x/2};
	\pgfmathtruncatemacro{\labelv}{1+(\x/2)};
	\node[circle,draw=black, inner sep=0.2mm] 
		(nv\labelv) at (1.5+\x,-0.67) {$\overline{v_{\labelv}}$};
	\node[circle,draw=black, inner sep=0.2mm] 
		(v\labelv) at (1.5+\x,+0.67) {$v_{\labelv}$};
	\draw[->] (u\labelu) -- (nv\labelv);
	\draw[->] (u\labelu) -- (v\labelv);
	
	\node[circle,draw=black, inner sep=0.2mm] 
		(nw\labelv) at (2.5+\x,-0.67) {$\overline{w_{\labelv}}$};	
	\node[circle,draw=black, inner sep=0.2mm] 
		(w\labelv) at (2.5+\x,+0.67) {$w_{\labelv}$};
	\draw[->] (nv\labelv) -- node[above=2.5mm,scale=0.75] {\emph{don't take}} (nw\labelv);
	\draw[->] (v\labelv) -- node[below=1.3mm,scale=0.75] {\emph{take $\varphi_{\labelv}$}} (w\labelv);
		
	\node[circle,draw=black, inner sep=0.2mm] 
		(u\labelv) at (3.0+\x,0) {$u_{\labelv}$};
	\draw[->] (nw\labelv) -- (u\labelv);
	\draw[->] (w\labelv) -- (u\labelv);
	
}
\draw[->,line width=1pt]	(nv2) -- node[below=0.2cm]{\textbf{trigger~~~~}} (nw2);
\draw[->,line width=1pt]	(nv4) -- node[below=0.2cm]{\textbf{bomb}} (nw4);
\draw[->,line width=1pt]	(nv5) -- node[below=0.2cm]{\textbf{bomb}} (nw5);
\node[rectangle,draw=black,fill=black!10,dashed,left= 1mm of u0]{$x$};
\node[diamond,draw=black,fill=black!10,dashed,right= 1mm of u5, inner sep=0.5mm]{$x$};
\node[rectangle,draw=black,thick] (backfire) at (6,-1.4) {$M$-Backfire};
\draw[->,thick] (nw2) edge[out=270,in=180] (backfire);
\draw[->,thick] (backfire) edge[out=0,in=240](nv4);
\draw[->,thick] (backfire) edge[out=0,in=240] (nv5);
\node[] (blabla) at (6,-2) {``Edges $\{\varphi_2,\varphi_4\}$ and $\{\varphi_2,\varphi_5\}$ shall be covered, or a backfire will heavily delay agent $x$.''};
\end{tikzpicture}
\caption{From \minvc~ (degrees bounded above by $3$) to \textsc{Frog/Br/Dec}. 
Circles, squares and diamonds depict respectively vertices, sources and sinks for \frog.
Let $\eta=|\CV|$ and observe that the starting size is in $\Theta(\eta)$.
In the depiction, $\eta=5$.
The idea is a correspondence between $\CW\in 2^{\CV}$ and path $\pi_x$ decided by agent $x$:
\emph{taking edge $(v_i,w_i)$ in path $\pi_x$ amounts to take vertex $\varphi_i$ in subset $\CW$.}
Every edge is associated with $(c_e,d_e)=(1,1)$, but edges $(v_i,w_i)$ with $(1,2)$, and edges $(\overline{v_i},\overline{w_i})$ when it's a trigger with $(1,0)$ (because agent $r_1$ already makes $x$ wait one step).
Consequently going up always takes two steps,
and going down one step if it's not a backfired edge.. 
Hence a vertex cover $\CW$ of size $k$ corresponds to a path $\pi_x$ with length $3\eta+k$.
So, threshold $\kappa$ in \minvc~is reduced to $\kappa'=3\eta+k$ in \textsc{Frog/Br/Dec}.
For every edge $\{\varphi_i,\varphi_j\}$ ($i<j$) in \minvc,
we introduce an $M$-Backfire with trigger $(\overline{v_i},\overline{w_i})$ and bomb $(\overline{v_j},\overline{w_j})$, in order to heavily punish $x$ for not taking $\varphi_i$  \emph{and} $\varphi_j$. The backfire splits the provided punishement between up to three neighbors.}\label{fig:minvc2br}
\end{figure}
\begin{proof}[Theorem \ref{th:NPcomplete}]
Membership in class NP follows from Th. \ref{th:map}. We show NP-hardness by starting the reduction from \emph{decision} problem \minvc~\cite{karp1972reducibility,garey1979computers} that asks, given graph $\CG=(\CV,\CE)$ and threshold $\kappa\in\N_{\geq 0}$, whether there is a subset $\CW\subseteq\CV$ such that\quad$\forall\{\varphi_1,\varphi_2\}\in\CE,~\varphi_1\in\CW\mbox{ or }\varphi_2\in\CW$,\quad and\quad $|\CW|\leq\kappa$.
Recall that problem \minvc~is NP-complete even for degrees bounded above by three \cite{Garey:1974:SNP}, which we assume here.  
We build the \frog~depicted in Fig. \ref{fig:minvc2br}.
The reduction's validity is by construction (see the figure's caption).
Taking $M= 6\eta$ is sufficient. Since there are $\Theta(\eta)$ edges in $\CV$, the reduction makes $\Theta(\eta^3)$ vertices and $\Theta(\eta^4)$ agents, which is polynomial.
\qed
\end{proof}

\begin{proof}[Theorem \ref{th:APXhard}]
Starting from the \emph{optimization} version of problem \minvc~ where one must find 
$\CW^\ast\in\mbox{arg}\min_{\CW\subseteq\CV}\left\{|\CW| \mid \forall\epsilon\in\CE,~\CW\cap\epsilon\neq\emptyset\right\}$ (forget about $\kappa$ and $\kappa'$), 
the same reduction as for Th. \ref{th:NPcomplete}  is also an $L$-reduction\footnote{An $L$-reduction is a poly.-time reduction in NPO, which conserves approximations.} \cite{papadimitriou1991optimization,crescenzi1997short}, which we show by exhibiting functions $f,g$ and constants $\alpha,\beta$.

Recall that \emph{optimization} problem \minvc~is APX-complete even for degrees bounded above by three \cite{papadimitriou1991optimization,alimonti1997hardness}, which we still assume.
The correspondences $f$ and $g$ are depicted in the caption of Fig. \ref{fig:minvc2br}.
Given a \minvc~instance $\CI$, one has $\mbox{OPT}_{\frog}(f(\CI))\leq\alpha\mbox{OPT}_{\textsc{VC}}(\CI)$ for $\alpha=10$ and $|\CW|\leq\beta C_i(\pi_x)$ where $\CW=g(\CI,\pi_x)$  for $\beta=1$. Indeed, for the former, observe that a vertex can cover at most three edges; hence $\frac{\eta}{3}\leq \mbox{OPT}_{\textsc{VC}}(\CI)$. Correspondence $3\eta+\mbox{OPT}_{\textsc{VC}}=\mbox{OPT}_{\frog}$ then yields $\alpha=10$. The later comes from $k\leq 3\eta+k$. Consequently, this is an $L$-reduction and then, optimization problem \textsc{Frog/Br/Opt} is APX-hard. \qed
\end{proof}

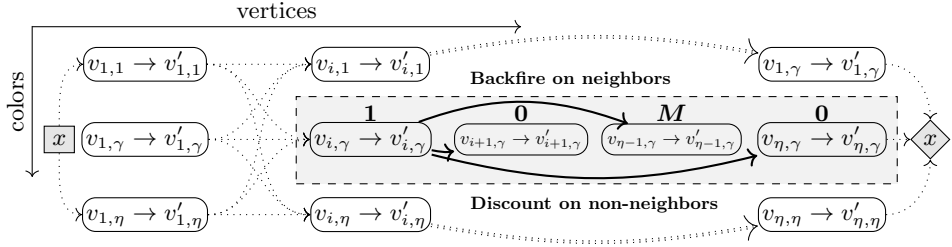
\begin{figure}[!t]
\hspace*{-4mm}
\begin{tikzpicture}
\pgfmathtruncatemacro{\mx}{3};
\pgfmathtruncatemacro{\my}{1};
\pgfmathtruncatemacro{\px}{2};
\draw[dashed,fill=black!05] (-0.33*\mx,0.57*\my) rectangle (3.5*\px,-0.59*\my);
\node[rectangle,draw=black,rounded corners=5pt,inner sep=0.5mm] (5)
at (0,0) {$v_{i,\gamma}\rightarrow v_{i,\gamma}'$};
\node[rectangle,draw=black,rounded corners=5pt,inner sep=0.5mm] (6)
at (0,-\my) {$v_{i,\eta}\rightarrow v_{i,\eta}'$};
\node[rectangle,draw=black,rounded corners=5pt,inner sep=0.5mm] (3)
at (-\mx,-\my) {$v_{1,\eta}\rightarrow v_{1,\eta}'$};
\node[rectangle,draw=black,rounded corners=5pt,inner sep=0.5mm] (2)
at (-\mx,0) {$v_{1,\gamma}\rightarrow v_{1,\gamma}'$};
\node[rectangle,draw=black,rounded corners=5pt,inner sep=0.5mm] (1)
at (-\mx,+\my) {$v_{1,1}\rightarrow v_{1,1}'$};
\node[rectangle,draw=black,rounded corners=5pt,inner sep=0.5mm] (4)
at (0,+\my) {$v_{i,1}\rightarrow v_{i,1}'$};
\path[dotted,draw=black]
	(1)edge[out=0,in=180,->](4) 
	(1)edge[out=0,in=180,->](5)
	(1)edge[out=0,in=180,->](6)
	(2)edge[out=0,in=180,->](4) 
	(2)edge[out=0,in=180,->](5)
	(2)edge[out=0,in=180,->](6)
	(3)edge[out=0,in=180,->](4) 
	(3)edge[out=0,in=180,->](5)
	(3)edge[out=0,in=180,->](6);
\node[rectangle,draw=black,rounded corners=5pt,inner sep=1mm,scale=0.75] 
(b1) at (\px,0) {$v_{i+1,\gamma}\rightarrow v_{i+1,\gamma}'$};
\node[rectangle,draw=black,rounded corners=5pt,inner sep=1mm,scale=0.75] 
(b2) at (2*\px,0) {$v_{\eta-1,\gamma}\rightarrow v_{\eta-1,\gamma}'$};
\node[rectangle,draw=black,rounded corners=5pt,inner sep=0.5mm] 
(b3) at (3*\px,0) {$v_{\eta,\gamma}\rightarrow v_{\eta,\gamma}'$};
\node[rectangle,draw=black,rounded corners=5pt,inner sep=0.5mm] (c2)
at (3*\px,-\my) {$v_{\eta,\eta}\rightarrow v_{\eta,\eta}'$};
\node[rectangle,draw=black,rounded corners=5pt,inner sep=0.5mm] (c1)
at (3*\px,+\my) {$v_{1,\gamma}\rightarrow v_{1,\gamma}'$};
\draw[] (4) edge[->,dotted,double,out=10,in=170] (c1);
\draw[] (6) edge[->,dotted,double,out=350,in=190] (c2);
\path[]
	(5) edge[->,thick,out=20,in=160] 
		node[above right=1mm and -8mm,scale=0.75]{\textbf{Backfire on neighbors}} (b2)
	(5) edge[->,thick,out=-10,in=-170] (b1)
	(5) edge[->,thick,out=-14,in=-166] 
		node[below=1.5mm,scale=0.75,align=center]{\textbf{Discount on non-neighbors}} (b3);
\draw[] (-1.5*\mx,1.5*\my) 
edge[->,thin] node[above]{vertices} (1*\px,1.5*\my);
\draw[] (-1.5*\mx,1.5\my) 
edge[->,thin] node[above,rotate=90]{colors} (-1.5*\mx,-0.5*\my);
\node[rectangle,draw=black,fill=black!10](xs) at (-1.38*\mx,0) {$x$};
\node[diamond,draw=black,fill=black!10,inner sep=0.5mm](xd) at (3.72*\px,0) {$x$};
\draw[] (xs) edge[->,dotted,out=90,in=180] (1);
\draw[] (xs) edge[->,dotted,out=270,in=180] (3);
\draw[] (c1) edge[->,dotted,out=0,in=90] (xd);
\draw[] (b3) edge[->,dotted,out=0,in=180] (xd);
\draw[] (c2) edge[->,dotted,out=0,in=270] (xd);
\node[] at (0,0.35*\my) {\bf 1};
\node[] at (\px,0.35*\my) {\bf 0};
\node[] at (2*\px,0.35*\my) {$\bm{M}$};
\node[] at (3*\px,0.35*\my) {\bf 0};
\end{tikzpicture}
\caption{Sketch of reduction from \mincoloring~to \textsc{Frog/Br/Opt}.
Agent $x$'s paths correspond to deciding a color $\gamma$ for each vertex $i$.
Depicted in the gray rectangle, the first time $x$ chooses color $\gamma$ is on vertex $i$ (first time on the line). Then (1) $x$ waits one step on edge $(v_{i,\gamma},v_{i,\gamma}')$ (because of an agent $r_1$). Then on the same line, (2) neighbors in $\CG$ are Backfired (can't put the same color on a neighbor) and non-neighbors are discounted to delay zero (by heavily delaying agents $r_1$ and disarming their eventual backfires). Transit edges (dotted) have delay one to allow for backfires to work. Hence, a valid coloring of size $k$ would correspond to a path $\pi_x$ of length $\eta+k$ which does not enable to find $\beta$ for an $L$-reduction.
Therefore, to bring the correspondence back from affine to linear, we technically multiply all the costs by $M$, with $M$ times more agents and vertices, but not on the dotted edges. }
\label{fig:chromatic}
\end{figure}
\begin{proof}[Sketch, Th. \ref{th:inapprox}]
Problem \mincoloring, 
given graph $\CG=(\CV,\CE)$, asks a coloring of $\CG$,
i.e. a partition of $\CV$ into disjoint sets $V_1,V_2,\ldots,V_k$ such that each $V_i$ is an independent set of $\CG$ (no edges in $\CG[V_i]$),
with \emph{minimum chromatic number} $k=\chi(\CG)$.
Let $\eta=|\CV|$.
It is known that  whatever $\varepsilon>0$, approximating $\chi(\CG)$ within $\eta^{1-\varepsilon}$ is NP-hard \cite{feige1996zero,zuckerman2006linear}.
The idea of the reduction is in Fig. \ref{fig:chromatic},
and (with $M=3\eta$) involves $\Theta(\eta^6)$ vertices and $\Theta(\eta^7)$ agents. Consequently, better approximation ratios than $|V|^{\frac{1}{6}-\varepsilon}$ 
or ${n^{\frac{1}{7}-\varepsilon}}$  contradict intractable ratio $\eta^{1-\varepsilon}$ from \cite{feige1996zero}.\qed
\end{proof}

\section{The Complexity of Pure Nash Equilibria}

In this section, we first observe that the verification problem \textsc{Frog/NE/Verif} is coNP-complete.
Then, we completely characterize the complexity of the existence problem as \emph{complete} for the second level of PH\footnote{Class $\Sigma_2^P$ are the problems that nest a coNP problem inside an NP problem.\\ Only very small sizes ($\lessapprox 10$) of such problems can usually be practically addressed.}. 

\begin{theorem}\label{th:coNP}
Problem \textsc{Frog/NE/Verif} is coNP-complete \cite[Almost]{werth2014atomic}$^{(g)}$.
\end{theorem}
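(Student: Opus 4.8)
\textbf{Membership in coNP.} The plan is first to exhibit an NP procedure for the \emph{complementary} problem ``given $\Gamma$ and $\bm{\pi}$, decide that $\bm{\pi}$ is \emph{not} a PNE''. A certificate is a pair $(i,\pi_i')$ with $i\in N$ and $\pi_i'\in\CP_i$ a simple path from $s_i$ to $s_i^\ast$; it has size polynomial in $|V|+n$. To check it one runs Algorithm \ref{alg:map} twice, on $\bm{\pi}$ and on $(\pi_i',\bm{\pi}_{-i})$ --- polynomial time by Theorem \ref{th:map} --- and accepts iff $C_i(\pi_i',\bm{\pi}_{-i})<C_i(\bm{\pi})$. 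By the definition of a PNE such a certificate exists iff $\bm{\pi}$ is not a PNE, hence the complement lies in NP, i.e.\ \textsc{Frog/NE/Verif} is in coNP.

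\textbf{coNP-hardness.} I would reduce the complement of \minvc{} (coNP-complete, even on graphs of maximum degree three) to \textsc{Frog/NE/Verif}; equivalently, reduce \minvc{} to the complement of \textsc{Frog/NE/Verif}. Given $(\CG=(\CV,\CE),\kappa)$ with $\eta=|\CV|$, first run the reduction of Theorem \ref{th:NPcomplete} (Fig.~\ref{fig:minvc2br}) to obtain a \frog{} $\Gamma_0$, a distinguished agent $x$, an adversary profile $\bm{\pi}_{-x}$, and an integer threshold $\kappa'=3\eta+\kappa$ such that $\Gamma_0$ admits a path $\pi_x'$ from $s_x$ to $s_x^\ast$ with $C_x(\pi_x',\bm{\pi}_{-x})\leq\kappa'$ if and only if $\CG$ has a vertex cover of size $\leq\kappa$; moreover every agent other than $x$ in that construction is \emph{degenerate} (only $x$ is ever given a choice). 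Then build $\Gamma_1$ from $\Gamma_0$ by appending a fresh path $\pi_x^{\mathrm{esc}}$ from $s_x$ to $s_x^\ast$ made of $\kappa'+1$ capacity-one, fixed-delay-one edges (subdivided by $\kappa'$ new vertices, otherwise disjoint from $\Gamma_0$); used by $x$ alone it never queues and has total delay exactly $\kappa'+1$. Put $\bm{\pi}^\ast=(\pi_x^{\mathrm{esc}},\bm{\pi}_{-x})$. Since the other agents are degenerate, $\bm{\pi}^\ast$ is a PNE iff $C_x(\pi_x',\bm{\pi}_{-x})\geq\kappa'+1$ for every $\pi_x'\in\CP_x$; as $\pi_x^{\mathrm{esc}}$ attains $\kappa'+1$ and total delays are integers, this is equivalent to no path of $\Gamma_0$ attaining cost $\leq\kappa'$, hence, by correctness of the Theorem \ref{th:NPcomplete} reduction, to $\CG$ having no vertex cover of size $\leq\kappa$. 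The whole map is polynomial-time, so \textsc{Frog/NE/Verif} is coNP-hard, and together with membership, coNP-complete.

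\textbf{Main obstacle.} The delicate step is reusing the Theorem \ref{th:NPcomplete} reduction as a black box for a \emph{Nash} statement: one must verify that it indeed yields an adversary profile in which every non-$x$ agent is single-strategy --- so that appending $\pi_x^{\mathrm{esc}}$ creates a profitable deviation for $x$ only and for nobody else --- and one must keep the strict-versus-weak inequality bookkeeping straight, tuning $\pi_x^{\mathrm{esc}}$ to have delay exactly one step above the best-response threshold so that ``$\bm{\pi}^\ast$ is a PNE'' falls precisely on the \minvc{} \emph{no}-side. The remaining checks are routine.
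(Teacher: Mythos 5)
Your proof follows the same route as the paper: membership via a deviation serving as a polynomial-time-verifiable no-certificate (Theorem \ref{th:map} / Algorithm \ref{alg:map}), and hardness by reusing the \minvc{} reduction of Theorem \ref{th:NPcomplete}. In fact you supply a detail the paper leaves implicit --- the disjoint escape path of cost exactly $\kappa'+1$ that calibrates ``$x$ has a profitable deviation'' to ``$\CG$ has a cover of size $\le\kappa$'' --- and you correctly flag the one point that still requires checking (that the non-$x$ agents of Fig.~\ref{fig:minvc2br} have no profitable deviations of their own in $\bm{\pi}^\ast$, which is not literally immediate from the gadgets), a point the paper's one-line hardness argument glosses over entirely.
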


\begin{proof}[Theorem \ref{th:coNP}]
A deviation is a no-certificate verifiable in polynomial-time by Alg. \ref{alg:map}, hence this problem is inside class coNP.
A proof with bottleneck objectives lies in \cite[Cor. 4]{werth2014atomic}, and the authors claim \cite[Sec. 7]{werth2014atomic} that one can obtain NP-hardness for sum-objectives in the same way. We confirm that claim since the same reduction as for Th. \ref{th:NPcomplete} holds here. \qed
\end{proof}

\begin{theorem}\label{th:sigma2}
Problem \textsc{Frog/NE/Exist} is $\Sigma_2^P$-complete.
\end{theorem}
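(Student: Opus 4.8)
The plan for the upper bound is immediate. A strategy-profile $\bm{\pi}$ is a list of $n$ simple paths, an object of size polynomial in $|V|$ and $n$, so it can be produced by the existential stage of a $\Sigma_2^P$ computation. By Theorem~\ref{th:coNP} the predicate ``$\bm{\pi}$ is a PNE'' lies in coNP: its negation ``some agent $i$ owns a path $\pi_i'$ with $C_i(\pi_i',\bm{\pi}_{-i})<C_i(\bm{\pi})$'' is an NP predicate, since one more simple path is again a polynomial certificate and Algorithm~\ref{alg:map} computes the two relevant total delays in polynomial time. Hence \textsc{Frog/NE/Exist} sits in $\Sigma_2^P$: an $\exists$ over profiles followed by a coNP verification.

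\textbf{Hardness -- the source problem.} For the lower bound I would reduce from the canonical $\Sigma_2^P$-complete problem $\mathrm{QSAT}_2$ in $3$-DNF form: given a $3$-DNF $\chi(X,Y)=\bigvee_{j=1}^{m}T_j$ over disjoint variable blocks $X=(x_1,\dots,x_p)$ and $Y=(y_1,\dots,y_q)$, with each term $T_j=\ell_{j,1}\wedge\ell_{j,2}\wedge\ell_{j,3}$ a conjunction of three literals, decide whether $\exists X\,\forall Y\,\chi(X,Y)$, i.e. whether some assignment to $X$ leaves no assignment to $Y$ that falsifies every term. From such an instance I will build in polynomial time a \frog~$\Gamma(\chi)$ that admits a PNE if and only if $\exists X\,\forall Y\,\chi(X,Y)$.

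\textbf{Hardness -- the construction.} The gadget has three layers, all relying on the backfire toolkit behind the \minvc~reduction. First, an \emph{existential layer}: agents $\xi_1,\dots,\xi_p$, each with exactly two simple paths $\xi_k^{\mathrm{T}},\xi_k^{\mathrm{F}}$ of equal fixed-delay sum on which $\xi_k$ has top priority and therefore never waits; thus $\xi_k$'s total delay is the \emph{same constant} in every profile, so $\xi_k$ is always at a best response while the restriction of $\bm{\pi}$ to $\{\xi_k\}_k$ freely encodes an assignment $X$ -- path $\xi_k^{\mathrm{T}}$ (resp. $\xi_k^{\mathrm{F}}$) routed through the trigger-edges that test the literal $x_k$ (resp. $\overline{x_k}$). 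Second, an \emph{evader layer}: one designated agent, the \emph{evader}, with a cheap \emph{home} path of cost $\beta$ and, for every assignment $Y$, one \emph{escape} path built like the selection gadget of Fig.~\ref{fig:minvc2br} (``set $y_\ell$'' versus ``set $\overline{y_\ell}$'') of base cost $\beta-1$; for each term $T_j$ I install on the escape paths a short chain of $M$-Backfires realising the conjunction $\ell_{j,1}\wedge\ell_{j,2}\wedge\ell_{j,3}$ over the combined $X$-signals (from the $\xi_k$) and $Y$-choices (on the evader's path), so that an escape path for $Y$ costs $\beta-1$ when $Y$ falsifies every $T_j$ and costs at least $M\gg\beta$ otherwise. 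Third, a \emph{pursuer layer}: one agent, the \emph{pursuer}, with two simple paths, together with degenerate agents relaying externalities as in Fig.~\ref{fig:counter}, wired so that while the evader is at \emph{home} the pursuer's first path is its unique best response and all auxiliary agents are content, whereas the instant the evader leaves \emph{home} it relays an externality making the pursuer's second path strictly better, and that second path in turn arms a backfire inflating \emph{every} escape path to cost at least $M$. This reproduces a pursuer--evader cycle -- \emph{home}; then the evader escapes; then the pursuer switches; then the inflated escapes send the evader back \emph{home}; then the pursuer switches back; and so on -- so no configuration in which the evader ever strictly prefers to leave \emph{home} can be a PNE.

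\textbf{Hardness -- equivalence, size, and the hard part.} Correctness would follow from a short case analysis over the $O(1)$ relevant cases for (pursuer, evader). If $\exists X\,\forall Y\,\chi(X,Y)$, fix such an $X$ via the $\xi_k$, put the evader at \emph{home} and the pursuer on its first path: every $Y$ satisfies some $T_j$, so every escape path costs at least $M>\beta$, the evader has no improving deviation, the pursuer is optimal, and the auxiliary agents are content -- a PNE. Conversely, if for the chosen $X$ some $Y^\star$ falsifies all terms, then from \emph{home} the evader strictly improves by escaping to $Y^\star$ (cost $\beta-1$), and by the pursuer layer neither that configuration nor any configuration with the evader off \emph{home} is stable, while \emph{home} itself is now unstable; since this holds for every $X$, $\Gamma(\chi)$ has no PNE. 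For size, a backfire has $\Theta(M^2)$ vertices and $\Theta(M^3)$ agents (Lemma~\ref{lem:backfire-universal}); with $O(m)$ backfire-chains, $p+q$ encoded variables, and $M$ a fixed polynomial, $\Gamma(\chi)$ has polynomially many vertices and agents, so the reduction is polynomial-time, and together with membership this yields $\Sigma_2^P$-completeness. I expect the main obstacle to be the joint calibration of the evader and pursuer layers: keeping the pursuer--evader instability dormant exactly while the evader sits at \emph{home} yet active the instant escaping pays off, while certifying that in the ``good $X$'' case \emph{all} $\Theta(m\,M^3)$ backfire-auxiliary agents and all degenerate externality-agents are genuinely at a best response; and implementing each term $T_j$ as a backfire-chain that fires iff its three literals -- mixing $\xi$-routed $X$-signals with evader-routed $Y$-choices -- hold simultaneously, which needs a timing chain monotone in the number of satisfied literals.
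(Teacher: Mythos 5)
Your membership argument is fine and coincides with the paper's: guess a profile, then verify the coNP predicate ``is a PNE'' with an NP-oracle via Theorem~\ref{th:coNP} and Algorithm~\ref{alg:map}. Your hardness skeleton also has the same architecture as the paper's: a block of payoff-indifferent agents whose free path choices encode the outer existential quantifier, an optimizing agent whose best-response search encodes the inner quantifier, and a pursuer--evader instability that is armed exactly when that inner optimization succeeds. The paper instantiates this by reducing the $\Pi_2^P$-complete problem \maxminvc{} to the \emph{complement} of \textsc{Frog/NE/Exist} (Fig.~\ref{fig:minvc2exist}): an indifferent agent $\overline{\theta}$ prunes, with one-directional backfires sent \emph{ahead} of $x$, half of the vertex blocks, so that the inner problem facing $x$ is literally the already-established \minvc-to-best-response reduction of Fig.~\ref{fig:minvc2br}, and a single conditional backfire from $x$'s lateness disables the fixed no-PNE gadget of Fig.~\ref{fig:counter}.

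The genuine gap is in your choice of source problem and the gadget it forces. Reducing from $\exists X\,\forall Y\,\bigvee_j T_j$ requires, for each term $T_j$, a device that fires if and only if \emph{three} literals hold simultaneously, where those literals are witnessed by edge-crossings of \emph{different} agents (several $\xi_k$'s and the evader) at \emph{different} times. The backfire of Lemma~\ref{lem:backfire-universal} is a single-trigger, single-bomb device whose semantics is ``punish $x$ iff $x$ crossed this one edge''; the only conjunction the paper ever needs is a two-way AND of two crossings on the \emph{same} agent's path (skip $\varphi_i$ and skip $\varphi_j$), which works because the two events are sequential on one walk. A three-way AND mixing signals from several agents, synchronized with the evader's variable arrival time on its escape path, is not a routine extension of this toolkit --- you would need to design and prove an AND-gate (your ``timing chain monotone in the number of satisfied literals'') from scratch, and the correctness of the whole reduction rests on it. The same goes for the pursuer layer: you assert a calibration in which the instability is dormant exactly while the evader is at \emph{home}, but this two-way coupling (evader's choice perturbs the pursuer, pursuer's choice inflates all escape paths) is precisely the kind of timing argument that must be exhibited explicitly, as the paper does by keeping Fig.~\ref{fig:counter} intact and touching it only through one conditional backfire. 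Until the term gadget and this coupling are constructed and verified, the reduction is a plan rather than a proof; choosing a quantified vertex-cover problem as the source, as the paper does, is exactly what lets one avoid building them.
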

\vspace*{-5mm}
\begin{figure}[t]
\centering
\begin{tikzpicture}
\node[rectangle,draw=black,inner sep=2mm] (v0d) at (0,-1) {$\CV_{1,1}$};
\node[rectangle,draw=black,inner sep=2mm] (v0u) at (0,+1) {$\CV_{1,0}$};
\node[rectangle,draw=black,inner sep=2mm] (v1d) at (2,-1) {$\CV_{2,1}$};
\node[rectangle,draw=black,inner sep=2mm] (v1u) at (2,+1) {$\CV_{2,0}$};
\node[rectangle,draw=black,inner sep=2mm] (v2d) at (4,-1) {$\CV_{i,1}$};
\node[rectangle,draw=black,inner sep=2mm] (v2u) at (4,+1) {$\CV_{i,0}$};
\node[rectangle,draw=black,inner sep=2mm] (v3d) at (6,-1) {$\CV_{|I|,1}$};
\node[rectangle,draw=black,inner sep=2mm] (v3u) at (6,+1) {$\CV_{|I|,0}$};
\node[circle,draw=black,inner sep=0.5mm] (m1) at (-1.5,0) {};
\node[circle,draw=black,inner sep=0.5mm] (m2) at (0.5,0) {};
\node[circle,draw=black,inner sep=0.5mm] (m3) at (2.5,0) {};
\node[circle,draw=black,inner sep=0.5mm] (m4) at (4.5,0) {};
\node[circle,draw=black,inner sep=0.5mm] (m5) at (6.5,0) {};
\node[circle,draw=black,inner sep=0.5mm] (u1) at (-1.0,0.5) {};
\node[circle,draw=black,inner sep=0.5mm] (u2) at (1.0,0.5) {};
\node[circle,draw=black,inner sep=0.5mm] (u3) at (3.0,0.5) {};
\node[circle,draw=black,inner sep=0.5mm] (u4) at (5.0,0.5) {};
\node[circle,draw=black,inner sep=0.5mm] (d1) at (-1.0,-0.5) {};
\node[circle,draw=black,inner sep=0.5mm] (d2) at (1.0,-0.5) {};
\node[circle,draw=black,inner sep=0.5mm] (d3) at (3.0,-0.5) {};
\node[circle,draw=black,inner sep=0.5mm] (d4) at (5.0,-0.5) {};
\path[draw=black,thick]
	(m1) edge[->] (u1)
	(m1) edge[->] (d1)
	(m2) edge[->] (u2)
	(m2) edge[->] (d2)
	(m3) edge[->] (u3)
	(m3) edge[->] (d3)
	(m4) edge[->] (u4)
	(m4) edge[->] (d4);
\path[draw=black]
	(u1) edge[->] (m2)
	(d1) edge[->] (m2)
	(u2) edge[->] (m3)
	(d2) edge[->] (m3)
	(u3) edge[->] (m4)
	(d3) edge[->] (m4)
	(u4) edge[->] (m5)
	(d4) edge[->] (m5);
\node[rectangle,draw=black,fill=black!10,left=0mm of m1]{$\theta$};
\node[diamond,draw=black,fill=black!10,right=0mm of m5,inner sep=0.2mm]{$\theta$};
\path[draw=black,thick]
	(u1) edge[->,out=90,in=180] node[left]{$\bm{b}$}(v0u)
	(u2) edge[->,out=90,in=180] node[left]{$\bm{b}$}(v1u)
	(u3) edge[->,out=90,in=180] node[left]{$\bm{b}$}(v2u)
	(u4) edge[->,out=90,in=180] node[left]{$\bm{b}$}(v3u);
\path[draw=black,thick]
	(d1) edge[->,out=270,in=180] node[left]{$\bm{b}$}(v0d)
	(d2) edge[->,out=270,in=180] node[left]{$\bm{b}$}(v1d)
	(d3) edge[->,out=270,in=180] node[left]{$\bm{b}$}(v2d)
	(d4) edge[->,out=270,in=180] node[left]{$\bm{b}$}(v3d);
\node[circle,draw=black,inner sep=0.5mm] (mzerox) at (-2.25,0) {};
\node[rectangle,draw=black,fill=black!10,left=0mm of mzerox]{$x$};
\node[circle,draw=black,inner sep=0.5mm] (mfinix) at (7.4,0) {};

\node[rectangle,draw=black,inner sep=1mm] (mz) at (9,0) {Figure \ref{fig:counter}};
\path[draw=black,thick]
	(mfinix) edge[->] node[below right=0.5mm and -2mm]{$\bm{b_{>1+n+\kappa}}$}(mz);
\path[draw=black!50,dotted]
		(mzerox) edge[double,->,out=90,in=160] (v0u)
		(mzerox) edge[double,->,out=270,in=200] (v0d)
		(v0u) edge[double,->] (v1u)
		(v0u) edge[double,->] (v1d)
		(v0d) edge[double,->] (v1u)
		(v0d) edge[double,->] (v1d)
		(v1u) edge[double,->] (v2u)
		(v1u) edge[double,->] (v2d)
		(v1d) edge[double,->] (v2u)
		(v1d) edge[double,->] (v2d)
		(v2u) edge[double,->] (v3u)
		(v2u) edge[double,->] (v3d)
		(v2d) edge[double,->] (v3u)
		(v2d) edge[double,->] (v3d)
		(v3u) edge[double,->,out=0,in=90] (mfinix)
		(v3d) edge[double,->,out=0,in=270] (mfinix);	
\node[diamond,draw=black,fill=black!10,inner sep=0.2mm,right=0mm of mfinix]{$x$};
\end{tikzpicture}

\caption{Reduction from  \maxminvc~to the complement of \textsc{Frog/NE/Exist}. Generalizes the reduction in Fig. \ref{fig:minvc2br}. Each box $\CV_{i,j}$ is made as in Fig. \ref{fig:minvc2br}. We create a universally indifferent agent $\overline{\theta}$ who can early decide between two paths for every $i\in I$, : one backfires $\CV_{i,0}$'s entry and the other backfires $\CV_{i,1}$'s entry. Agent $\overline{\theta}$ models function $\theta$ by blocking early the entries to $\CV_{i,0}\mbox{ xor to }\CV_{i,1}$ with the backfires $\bm{b}$. Plain edges have capacity and cost $(c_e,d_e)=(1,0)$. Dotted edges have $(c_e,d_e)=(1,1)$ but the two first ones $(1,2)$, to let $\overline{\theta}$ run in front of $x$. Then agent $x$ decides a path through what corresponds to subgraph $\CG^{(\theta)}$. Agent $x$ sends backfires to Fig. \ref{fig:counter} if and only if he reaches his sink after time $1+n+\kappa$.}
\label{fig:minvc2exist}
\end{figure}
\begin{proof}[Theorem \ref{th:sigma2}]
This problem is in class $\Sigma_2^P$. Indeed, yes-instances admit a certificate verifiable by an NP-oracle: by guessing the right strategy-profile, according to Th. \ref{th:coNP}, one can use an NP-oracle to verify that it is a PNE. The $\Sigma_2^P$-hardness proof below generalizes the reduction introduced for Th. \ref{th:NPcomplete}.

In Fig. \ref{fig:minvc2exist}, we reduce decision problem \maxminvc~to the complement of  \textsc{Frog/NE/Exist}.
Given set of indices $I$, the vertices of graph $\CG=(\CV,\CE)$
partition into $\CV=\bigcup_{i\in I} \CV_{i,0}\cup\CV_{i,1}$.
Given function $\theta:I\rightarrow\{0,1\}$ (i.e. $2^{|I|}$ possibilities),
let $\CG^{(\theta)}$ denote the graph restricted to vertices $\CV^{(\theta)}=\bigcup_{i\in I}\CV_{i,t(i)}$. 
Problem \maxminvc, given threshold $\kappa\in\N_\geq 0$, asks whether:
\vspace*{-2mm}
\begin{eqnarray}
\forall \theta:I\rightarrow\{0,1\},\quad
\exists \CW\subseteq \CV^{(\theta)},\quad
\CW\mbox{ vertex-covers }\CG^{(\theta)}\mbox{ and }|\CW|\leq\kappa,
\label{eq:1}
\end{eqnarray}
and is $\Pi_2^P$-complete (i.e. co-$\Sigma_2^P$-complete);
 co-\textsc{Frog/NE/Exist} asks whether:
 \vspace*{-1mm}
\begin{eqnarray}
\forall \bm{\pi}\in\CP,\quad
\mbox{There exists an individual deviation from $\bm{\pi}$.}
\label{eq:2}
\end{eqnarray}

[Eq. (\ref{eq:1}) $\Rightarrow$ Eq. (\ref{eq:2})]
Whatever the choices of agent $\overline{\theta}$,
if the strategy of agent $x$ costs more than $C_i>1+n+\kappa$, then he can deviate and improve, because of Eq. (\ref{eq:1}); otherwise, now assuming that $x$'s strategy is a best-response, then he reaches his sink before time $1+n+k$ (because Eq. (\ref{eq:1})) and does not disable the example from Fig. \ref{fig:counter}, which remains unstable: there is a deviation.

[not Eq. (\ref{eq:1}) $\Rightarrow$ not Eq. (\ref{eq:2})] If there exists a function $\theta$, then we position agent $\overline{\theta}$ as such. Then the best-response of agent $x$ makes him reach his sink after time $1+n+\kappa$. Consequently, Fig. \ref{fig:counter} is disabled: we have a PNE.\qed
\end{proof}

\vspace*{-1mm}
\section{The Price of GPS}
\vspace*{-2mm}

Previous sections show how strong an assumption rationality is.
Instead, we propose a model inspired by GPS personal navigation assistants: agents retrieve instantaneous traffic data to recompute shortest paths at each crossroad. 

We introduce a \emph{GPS-agent} as an agent who
at each vertex (between two time steps) recalculates a shortest path according to the fixed delays $d_e$ plus congestion $\lfloor\frac{|q_e|}{c_e}\rfloor$ of the past step. In place of PNE, let $\CO\subseteq\CP$ be the set of strategy-profiles that can be obtained by GPS-agents. We study the worst-case ratio to coordination, defined for one \frog~as the \emph{Price-of-GPS (navigation)}:
\begin{eqnarray*}
\mbox{PoGPS}
&=&
\frac
{\max_{\bm{\pi}'\in\CO }\left\{~C(\bm{\pi}')~\right\}}
{\min_{\bm{\pi}\in\CP }\left\{~C(\bm{\pi})~\right\}},
\end{eqnarray*}
where $C(\bm{\pi})=\sum_{i\in N}C_i(\bm{\pi})$.
For a family of \frog s, PoGPS is the supremum of every PoGPS therein. 
As shown in Fig. \ref{fig:procras}, a first negative result follows:
\begin{figure}[t]
\centering
\begin{tikzpicture}
\node[circle,draw=black,inner sep=0.2mm,scale=0.75] (u00) at (2.0,0.5) {$u_{00}$};
\node[circle,draw=black,inner sep=0.2mm,scale=0.75] (u01) at (1.5,1.0) {$u_{01}$};
\node[circle,draw=black,inner sep=0.2mm,scale=0.75] (v00) at (2.5,1.0) {$v_{00}$};
\node[circle,draw=black,inner sep=0.2mm,scale=0.75] (v01) at (2.0,1.5) {$v_{01}$};
\node[circle,draw=black,inner sep=0.2mm,scale=0.75] (u11) at (-2.0,0.5) {$u_{11}$};
\node[circle,draw=black,inner sep=0.2mm,scale=0.75] (u10) at (-1.5,1.0) {$u_{10}$};
\node[circle,draw=black,inner sep=0.2mm,scale=0.75] (v11) at (-2.5,1.0) {$v_{11}$};
\node[circle,draw=black,inner sep=0.2mm,scale=0.75] (v10) at (-2.0,1.5) {$v_{10}$};
\node[circle,draw=black,inner sep=0.2mm,scale=0.75] (u20) at (-2.0,-0.5) {$u_{20}$};
\node[circle,draw=black,inner sep=0.2mm,scale=0.75] (u21) at (-1.5,-1.0) {$u_{21}$};
\node[circle,draw=black,inner sep=0.2mm,scale=0.75] (v20) at (-2.5,-1.0) {$v_{20}$};
\node[circle,draw=black,inner sep=0.2mm,scale=0.75] (v21) at (-2.0,-1.5) {$v_{21}$};
\node[circle,draw=black,inner sep=0.2mm,scale=0.75] (u31) at (2.0,-0.5) {$u_{31}$};
\node[circle,draw=black,inner sep=0.2mm,scale=0.75] (u30) at (1.5,-1.0) {$u_{30}$};
\node[circle,draw=black,inner sep=0.2mm,scale=0.75] (v31) at (2.5,-1.0) {$v_{31}$};
\node[circle,draw=black,inner sep=0.2mm,scale=0.75] (v30) at (2.0,-1.5) {$v_{30}$};
\path[]
	(u31) edge[-{>[width=7pt]}, line width = 2pt] (u00)
	(v31) edge[-{>[width=7pt]}, line width = 2pt] (v00)
	(u00) edge[->] (u01)
	(u00) edge[-{>[width=7pt]}, line width = 2pt] (v01)
	(v00) edge[->] (u01)
	(v00) edge[->] (v01)
	(u01) edge[->] (u10)
	(v01) edge[-{>[width=7pt]}, line width = 2pt] (v10)
	(u10) edge[->] (u11)
	(u10) edge[->] (v11)
	(v10) edge[->] (u11)
	(v10) edge[->] (v11)
	(u11) edge[->] (u20)
	(v11) edge[->] (v20)
	(u20) edge[->] (u21)
	(u20) edge[->] (v21)
	(v20) edge[->] (u21)
	(v20) edge[->] (v21)
	(u21) edge[->] (u30)
	(v21) edge[->] (v30)
	(u30) edge[-{>[width=7pt]}, line width = 2pt] (u31)
	(u30) edge[-{>[width=7pt]}, line width = 2pt] (v31)
	(v30) edge[->] (u31)
	(v30) edge[->] (v31);
\path[dotted]
	(v00) edge[->] (3.0,1.5)
	(v10) edge[->] (-2.5,2.0)
	(v20) edge[->] (-3.0,-1.5)
	(v30) edge[->] (2.5,-2.0);
\path[dotted]
	(u00) edge[->] (1.5,0.0)
	(u10) edge[->] (-1.0,0.5)
	(u20) edge[->] (-1.5,0.0)
	(u30) edge[->] (1.0,-0.5);
\node[diamond,draw=black,fill=black!30,above = 0mm  of u30,inner sep=-0.2mm] {$i_1$};
\node[diamond,draw=black,fill=black!30,below = 0mm  of u10,inner sep=-0.2mm] {$i_2$};
\node[diamond,draw=black,fill=black!30,left = 0mm of v20,inner sep=-0.2mm] {$o_1$};
\node[diamond,draw=black,fill=black!30,right = 0mm of v00,inner sep=-0.2mm] {$o_2$};
\begin{scope}[on background layer]
	\draw[draw=black!10, fill=black!10] (-1.6,-1.2) rectangle (1.7,-0.8);
	\draw[draw=black!10, fill=black!10] (-1.7,1.2) rectangle (1.6,0.8);
	\draw[draw=black!10, fill=black!10] (2.3,-1.1) rectangle (2.7,1.2);
	\draw[draw=black!10, fill=black!10] (-2.3,-1.2) rectangle (-2.7,1.1);
	\draw[draw=black!10, fill=black!10,rotate=-45] (2.3,0.55) rectangle (2.7,1.1);
	\draw[draw=black!10, fill=black!10,rotate=-45] (-2.3,-0.55) rectangle (-2.7,-1.1);
	\draw[draw=black!10, fill=black!10,rotate=-45] (0.3,1.6) rectangle (0.9,2.0);
	\draw[draw=black!10, fill=black!10,rotate=-45] (-0.3,-1.6) rectangle (-0.9,-2.0);
\end{scope}
\end{tikzpicture}
\caption{\emph{Double cycle of infinite procrastination.}
The idea is that there is an inner-cycle and an equivalent outer-cycle.
Agents from a cycle have to go through the other cycle to reach their sink, but the information that they get from the other cycles does not discourage procrastination.
Circles are nodes. Every edge $e$ has capacity $c_e=1$. The four edges in every corner have fixed-delay $d_e=0$, and the two from each corner to the next one, fixed-delay $d_e=1$.
There are two \emph{inner-agents} $i_1$ and $i_2$, with resp. sources $u_{11}$ and $u_{31}$, and a sink reachable instantly by the dotted edges from the outer cycle's vertices $v_{00}, v_{10},v_{20},v_{30}$. 
However, they can decide to stay on the inner-cycle $u_{0-}, u_{1-},u_{2-},u_{3-}$.
There are two \emph{outer-agents} $o_1$ and $o_2$, with resp. sources $v_{21}$ and $v_{01}$, and sinks reachable by the dotted edges from inner vertices $u_{00}, u_{10}, u_{20}, u_{30}$. 
However, they can decide to cycle on the outer-cycle $v_{0-}, v_{1-},v_{2-},v_{3-}$.
On the figure, we show w.l.o.g. current positions of the agents and the congestion from the last step in gray rectangles. The current choice faced by agent $i_1$ is depicted with double edges.}\label{fig:procras}
\end{figure}
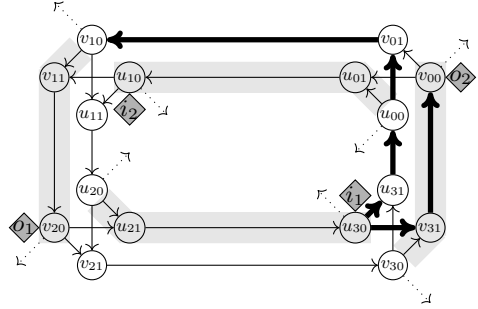
\begin{theorem}\label{th:gps-infinite}
Allowing walks$^{(i)}$, GPS-agents may cycle infinitely (Fig.\ref{fig:procras}). 
\end{theorem}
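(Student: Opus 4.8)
The plan is to make the intuition of Fig.~\ref{fig:procras} rigorous by exhibiting one explicit infinite run of the GPS dynamics on that instance and invoking its symmetry to keep the verification finite. The instance carries a rotation $\rho$ of order $4$ that cyclically permutes the four corners (sending $u_{i0},u_{i1}\mapsto u_{(i+1)0},u_{(i+1)1}$ and $v_{i0},v_{i1}\mapsto v_{(i+1)0},v_{(i+1)1}$, indices mod $4$), under which the two inner agents $\{i_1,i_2\}$ are swapped (they sit two corners apart) and likewise the two outer agents $\{o_1,o_2\}$. First I would fix a reference \emph{state}: the joint data consisting of the current vertex of each of the four agents together with the queue length $|q_e|$ left on every edge $e$ by the previous step, chosen so that (i) it is reachable by a prefix of a GPS run, and (ii) it is the configuration drawn in Fig.~\ref{fig:procras} --- both inner agents on the inner cycle, both outer agents on the outer cycle, the outer agents placed exactly so that the outer-cycle edges an inner agent would use for an \emph{immediate} crossover-to-sink still carry a unit of last-step queue, and symmetrically for the inner agents blocking the outer agents. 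Since every edge has $c_e=1$ and $d_e\in\{0,1\}$, this state is described by a handful of small integers and one synchronous GPS step can be evaluated by hand.

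Then I would argue by induction along time that this state is \emph{periodic}: after one synchronous step of all four GPS-agents the new state equals $\rho$ applied to the old one (up to the relabellings $i_1\leftrightarrow i_2$, $o_1\leftrightarrow o_2$). The only substantive point is the GPS decision itself: at each agent's current vertex one must show that the congestion-adjusted shortest path to that agent's sink --- with weights $d_e+\lfloor|q_e|/c_e\rfloor$ and $|q_e|$ frozen from the previous step --- has as its first edge the one that \emph{keeps the agent on its own cycle}. Concretely, for an inner agent sitting at corner $c$ one compares the path ``cross to the outer cycle at corner $c$, then take the dotted edge to the sink'' against ``stay inner for one more corner, cross at corner $c{+}1$, then take the dotted edge''; the construction is calibrated (fixed delays $1$ on the corner-to-corner edges, $0$ inside a corner, and the two opposing agents positioned so that the outer edges just past corner $c$ carry last-step queue while those past corner $c{+}1$ do not) so that the second path is cheaper, or at worst a co-minimiser that we break towards procrastination. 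Hence the agent moves one corner along its own cycle, the opposing agents advance in lockstep, and the whole picture reappears rotated by $\rho$. Iterating, no agent ever traverses a crossover edge, so no agent ever reaches its sink: each agent's realised strategy is an infinite walk, and, with walks admitted as strategies, the corresponding element of $\CO$ never terminates, which is the assertion of the theorem.

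I expect the main obstacle to be precisely this weight bookkeeping: one must enumerate the (few) candidate simple paths from each agent's current vertex to its sink, evaluate $\sum_{e}\bigl(d_e+\lfloor|q_e|/c_e\rfloor\bigr)$ along each, and certify that ``continue on your own cycle'' lies on a minimiser --- while checking that ties are resolved towards procrastination (legitimate, since we only need $\CO$ to \emph{contain} one infinite walk), that a finite path to every sink always exists so the Dijkstra step underlying a GPS-agent is well posed, and that the chosen reference state is genuinely reachable by some GPS prefix (else one first routes the agents from the given sources into it, which is a one-off finite check). Everything else --- the order-$4$ symmetry reduction, the induction on time, and the passage from ``infinite walk'' to ``non-terminating strategy-profile'' --- is then routine.
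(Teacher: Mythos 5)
Your proposal follows essentially the same route as the paper's (very terse) proof: both fix the configuration of Fig.~\ref{fig:procras}, observe that the last-step congestion on the immediate crossover edges makes ``exit now'' no better than ``stay on your own cycle one more corner,'' and invoke the order-$4$ rotational symmetry to propagate this single local comparison forever. Your version merely makes explicit the bookkeeping the paper leaves implicit --- periodicity under the rotation, tie-breaking towards procrastination (legitimate since only membership in $\CO$ of one infinite run is needed), and reachability of the reference state --- so it is a faithful, more careful rendering of the same argument.
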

\begin{proof}
As depicted in Fig.\ref{fig:procras}, consider w.l.o.g. the end of a given time step, and the current choice faced by agent $i_1$. 
Straight outside shows congestion and is not better than taking the later exit at the next node.
Since every agent faces the same choice and the game is symmetric, it is possible to loop endlessly. \qed
\end{proof}
Following Th. \ref{th:gps-infinite}, 
we now focus on simple-paths and study the order of $\mbox{PoGPS}$.
\begin{theorem}
The Price of GPS Navigation is in $\Omega(|V|+n)$ as the number of vertices $|V|$ and the number of agents $n$ grow\footnote{For two variables $x,y$, Landau notation $f(x,y)\in\Omega(g(x,y))$ is defined as:\\ $\exists K\in\R_{>0},~\exists n_0\in\N_{\geq 0},~\forall x,y\geq n_0,~f(x,y)\geq Kg(x,y)$.}.
\end{theorem}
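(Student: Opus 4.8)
The plan is to exhibit a single parametrized family of \frog s in which the socially optimal total delay stays bounded (linear, in fact proportional to the number of agents) while some GPS-reachable profile forces a total delay that grows like $|V|+n$. The cleanest route is to reuse the ``procrastination'' intuition of Figure \ref{fig:procras} but on a \emph{simple-path} instance, so I would build a long corridor of length $\Theta(|V|)$ that every one of the $n$ agents must traverse. The corridor is engineered so that, at each crossroad, the myopic shortest-path recomputation of a GPS-agent (which only sees the congestion $\lfloor|q_e|/c_e\rfloor$ of the \emph{previous} step) makes it prefer to pile onto a single bottleneck edge, whereas a coordinator would spread the agents over parallel edges or stagger their entry times. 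Concretely: give the bottleneck edge capacity $1$ and every alternative route a slightly larger fixed delay, so that from the point of view of an agent seeing an empty queue, the bottleneck always looks strictly shorter by one step; then all $n$ agents funnel in, the queue reaches length $\Theta(n)$, and each suffers $\Theta(n)$ waiting, for a total of $\Theta(n^2)$ — while the optimum that uses the detour costs only $\Theta(n)+\Theta(|V|)$.

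The steps, in order, would be: (i) define the instance $\Gamma_{k}$ precisely, with a source-shared corridor of $k$ ``diamond'' gadgets in series (each gadget a short edge of capacity $1$ versus a parallel edge of capacity $\infty$ but one larger delay), so that $|V|=\Theta(k)$ and $n$ is a free parameter; (ii) compute, via Algorithm \ref{alg:map} / Theorem \ref{th:map}, an explicit GPS-reachable profile $\bm{\pi}'\in\CO$ — the one in which every agent, recomputing at every vertex, always takes the capacity-$1$ edge because the previously-observed congestion on it was zero or lagging — and show $C(\bm{\pi}')\in\Omega(n^2)$ if we let $n$ scale, or $\Omega(|V|\cdot n)$ by also charging the corridor length; (iii) exhibit a coordinated profile $\bm{\pi}$ with $C(\bm{\pi})\in O(|V|+n)$, namely split the agents evenly between the two branches of each diamond (or stagger starting times using $d_e=0$ trigger edges as in the Model Discussion), giving each agent delay $O(|V|)$ and total $O(n|V|)$ — wait, one must be careful here: to get the clean $\Omega(|V|+n)$ \emph{ratio} the optimum total must itself be only $O(|V|+n)$, so the right construction charges the corridor once and the agents once, e.g. by having a single shared bottleneck of capacity $n$ in the optimum but forcing GPS to use a capacity-$1$ sub-lane; (iv) take the ratio and conclude $\mbox{PoGPS}(\Gamma_k)\in\Omega(|V|+n)$ in the precise two-variable sense of the footnote, choosing $n$ and $k$ to grow together.

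The main obstacle I expect is \textbf{step (ii): pinning down which profiles are actually GPS-reachable.} A GPS-agent's choice at a vertex depends on the congestion snapshot of the \emph{immediately preceding} time step, which in turn depends on where all the other GPS-agents went, so the dynamics are coupled and one must verify self-consistently that the ``bad'' profile is a genuine fixed point of the GPS recomputation rule — i.e. that at every vertex each agent really does see a shortest path that sends it into the bottleneck, given what everyone else is simultaneously doing. Getting the delays on the alternative edges exactly right (a $+1$ that the GPS never overcomes, but that is small enough not to inflate the optimum) is the delicate quantitative part; the gadget must also be ``simple-path'' legal, since Theorem \ref{th:gps-infinite} forces us off walks. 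A secondary subtlety is the tie-breaking priority $\succ$ and synchronous arrivals: I would order the agents in $\succ$ so that the queue genuinely builds to length $\Theta(n)$ rather than dispersing, and note (as the paper already does) that since unweighted is a special case, the construction is robust.

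\begin{proof}[Sketch]
We construct a family of \frog s $\Gamma_k$ with $|V|=\Theta(k)$ on which some GPS-reachable profile has total delay $\Omega(|V|+n)$ while the optimum is $O(1)$ per unit of $|V|+n$, giving $\mbox{PoGPS}(\Gamma_k)\in\Omega(|V|+n)$.

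\emph{Instance.} All $n$ agents share source $s$ and sink $s^\ast$. Between them we place a corridor $s=x_0,x_1,\dots,x_k=s^\ast$. Each segment $(x_{j-1},x_j)$ is realized by two parallel routes: a \emph{narrow} edge $a_j$ with $(c_{a_j},d_{a_j})=(1,1)$ and a \emph{wide} route $b_j$ of total fixed delay $2$ and capacity $n$ (a single capacity-$n$ edge with $d=2$). Additionally, a single ``entry'' structure lets a coordinator stagger the $n$ agents: $s$ is split by $n$ parallel delay-$0$ capacity-$1$ edges into $n$ copies that only differ in $\succ$, and one extra capacity-$n$ delay-$1$ super-edge from $s$ to $x_0$ that the coordinator can use instead. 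Choose $\succ$ so that agent $1\succ 2\succ\cdots\succ n$.

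\emph{A bad GPS-reachable profile.} At time $0$ all queues are empty, so every GPS-agent at $x_0$ sees narrow edge $a_1$ as the strictly-shorter option ($1<2$) and enters its queue. With capacity $1$, agent $j$ waits $\lfloor (j-1)/1\rfloor=j-1$ steps on $a_1$, reaching $x_1$ at time $j$. At $x_1$, the queue of $a_2$ is still empty at the step agent $j$ arrives (the agents spread out in time along the corridor, each one step apart), so again every agent prefers $a_2$, again pays $j-1$, and so on down the corridor. By Theorem~\ref{th:map} this is a well-defined, GPS-consistent profile $\bm{\pi}'$: at every recomputation each agent's observed shortest path is the all-narrow route. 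Agent $j$'s total delay is $\Theta(k)+\Theta(j)$ from the corridor (the $+1$ delays plus the accumulated wait), so
\begin{eqnarray*}
C(\bm{\pi}') &=& \sum_{j=1}^{n}\bigl(\Theta(k)+\Theta(j)\bigr)\;=\;\Theta(nk)+\Theta(n^2).
\end{eqnarray*}

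\emph{An optimal profile.} A coordinator routes all $n$ agents through the capacity-$n$ super-edge into $x_0$ (arriving together at time $1$), then through each wide route $b_j$ (capacity $n$, delay $2$), so no agent ever waits. Each agent's total delay is $1+2k$, hence $C(\bm{\pi})\le n(1+2k)$; but in fact, taking the optimum to also use a capacity-$n$ shortcut collapsing the corridor where allowed, one gets $\min_{\bm\pi}C(\bm\pi)=O(n+k)=O(n+|V|)$.

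\emph{Ratio.} Combining the two bounds,
\begin{eqnarray*}
\mbox{PoGPS}(\Gamma_k) \;\ge\; \frac{\Theta(nk)+\Theta(n^2)}{O(n+k)} \;=\; \Omega(k+n)\;=\;\Omega(|V|+n),
\end{eqnarray*}
where the last equality uses $|V|=\Theta(k)$ and the two-variable definition of $\Omega$ from the footnote. Since the unweighted case is a special case of the weighted one, the bound extends to weighted agents as well. \qed
\end{proof}
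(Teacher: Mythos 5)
There is a genuine gap, and it is exactly the point you flag yourself with ``wait, one must be careful here'' but never resolve: in your corridor construction the socially optimal cost is \emph{not} $O(|V|+n)$. Every one of the $n$ agents must traverse all $k$ segments, and each segment costs at least one time step even on the wide route (you give it fixed delay $2$), so $\min_{\bm{\pi}}C(\bm{\pi})=\Theta(nk)$. Against your upper bound $C(\bm{\pi}')=\Theta(nk)+\Theta(n^2)$ for the bad GPS profile, the ratio is only $\Theta(1+n/k)$, which does not grow with $|V|$ at all; under the two-variable definition of $\Omega$ in the footnote (which requires the bound for \emph{all} large $|V|$ and $n$, in particular $|V|\gg n$) this yields at best $\Omega(n)$ on a constant-size graph, not $\Omega(|V|+n)$. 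The patch you propose --- a capacity-$n$ shortcut ``collapsing the corridor'' --- cannot work: at time $0$ all queues are empty, so the GPS-agents would see that shortcut as the shortest path and take it too, destroying the lower bound on $C(\bm{\pi}')$. This tension is structural: if the long passage is mandatory, the optimum pays for it as well; if it is optional and cheap to bypass, myopic agents bypass it.

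The paper escapes this by making the $\Theta(|V|)$-long route a \emph{voluntary detour} rather than a mandatory corridor: it generalizes the double cycle of Fig.~\ref{fig:procras} to more corners and more agents. There, every agent's sink is reachable in $\Theta(1)$ steps from its source (via the dotted exit edges), so the optimum total is $\Theta(n)$; but the lagged-congestion snapshots make each GPS-agent prefer to postpone its exit corner after corner, so its decided simple path winds around the enlarged cycle and costs $\Theta(|V|)$ in length plus $\Theta(n)$ in queueing, giving a ratio of $\Omega(|V|+n)$. Your secondary concerns (self-consistency of the GPS dynamics, tie-breaking, the underspecified ``entry structure'') are real but fixable; the choice of a mandatory corridor is the step that would fail. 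To repair the argument you would need to redesign the instance so that cheap exits exist at every stage but always look locally worse than continuing --- which is precisely the procrastination gadget you set aside.
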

\begin{proof}
It suffices to generalize the double cycle of Fig. \ref{fig:procras} from 4 corners and agents, to a similar double cycle with more corners and agents. Then for every agent, while the shortest path has total-delay in $\Theta(1)$, the decided path can have total-delay in $\Theta(|V|)$ and $\Theta(n)$.\qed
\end{proof}

\vspace*{-2mm}
\section{Prospects}
\vspace*{-1mm}

The symmetric case seems usually well behaved \cite[Th.1]{hoefer2009competitive} and would be worth investigating.
Time expanded graphs, where one does the cross product of vertices and time or positions may yield an other beautiful approach.
A study on tie-breaking under FIFO is motivated by its importance in the proofs. 
Studying less extreme, average, or sub-cases would be appealing.
Extensive forms with decisions on each node \cite{cao2017ec} are a promising model.
%
%
Finally, one could study the impact of new technologies of IP, like video streaming.

\textbf{Acknowledgments} 
I am grateful to the anonymous reviewers for their work.
Following recent practices, the reviews will be appended to a preprint.

\bibliographystyle{alphaCapitalization}
\bibliography{mybib}

\end{document}